\title{\textbf{Refactoring Software Packages via Community Detection from Stability Point of View}}
\author{Mohammad Reza Ahmadzadeh Raji}
\date{\bigskip\bigskip
	A Thesis Presented in Partial Fulfilment of the Requirements for the Degree of Master of Engineering in Computer Engineering
	\\ Razi University \\
	\bigskip
	2014
	}
\definecolor{mygreen}{rgb}{0,0.6,0}
\definecolor{mygray}{rgb}{0.5,0.5,0.5}
\definecolor{mymauve}{rgb}{0.58,0,0.82}
\lstdefinelanguage{JavaScript}{
  keywords={typeof, new, true, false, catch, function, return, null, catch, switch, var, if, in, while, do, else, case, break},
  keywordstyle=\color{blue}\bfseries,
  ndkeywords={class, export, boolean, throw, implements, import, this},
  ndkeywordstyle=\color{darkgray}\bfseries,
  identifierstyle=\color{black},
  sensitive=false,
  comment=[l]{//},
  morecomment=[s]{/*}{*/},
  commentstyle=\color{purple}\ttfamily,
  stringstyle=\color{red}\ttfamily,
  morestring=[b]',
  morestring=[b]"
}
\tiny\color{mygray},
\newtheorem{defi}{Definition}
\newtheorem{theorem}{Theorem}[section]
\newtheorem{remark}[theorem]{Remark}
\newtheorem{prop}[theorem]{Proposition}
\def\BState{\State\hskip-\ALG@thistlm}
\begin{document}

\thispagestyle{empty}
\begin{center}

\includegraphics[scale=.4]{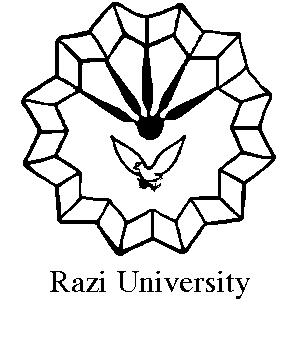} \\

\fontsize{11pt}{12pt}\textbf{Faculty of Engineering} \\ 
\fontsize{10pt}{12pt}\textbf{Department of Computer Engineering}\\[2cm]

\fontsize{13pt}{12pt}\textbf{M.Sc.Thesis} \\ [6mm]
\vskip 2.5cm

\fontsize{15pt}{12pt}
{
	\textbf{{Refactoring Software Packages via Community Detection \vspace{0.3cm}\\ from Stability Point of View}}
}

\vskip 1.5cm
\fontsize{13pt}{12pt}
{
	\textbf{Supervisor:} \\ \textbf{Dr. Behzad Montazeri}
	\vskip 1.5cm
	\vskip 1.5cm
	\textbf{By}         \\ \textbf{Mohammad Reza Ahmadzadeh Raji}
}
\vskip 1.5cm
\textbf{June 2014}

\end{center}

\pagebreak

%
%
%
%


\pagebreak
\chapter*{Acknowledgements}
\thispagestyle{empty}
\paragraph{}
First and foremost, I would like to express my appreciation and thanks to my supervisor, Dr. Behzad Montazeri. I would like to thank him for his encouragement, motivation and priceless advices throughout my research. 
\paragraph{}
I would also like to express my deepest appreciation to my father, Dr. Mehrdad Ahmadzadeh, my mother and my wife, Sara who have been the most supportive, through thick and thin and have helped me in the difficulties of the paths I have taken.

\pagebreak
\thispagestyle{empty}
\begin{center}
	\vspace*{\fill}
		\large\textbf{Dedicated to my beloved wife, for her endless support and encouragement}
	\vspace*{\fill}
\end{center}

\pagebreak
\chapter*{Abstract}
\thispagestyle{empty}
\paragraph{}
As the complexity and size of software projects increases in real-world environments, maintaining and creating maintainable and dependable code becomes harder and more costly. Refactoring is considered as a method for enhancing the internal structure of code for improving many software properties such as maintainability. 
\paragraph{}
In this thesis, the subject of refactoring software packages using community detection algorithms is discussed, with a focus on the notion of package stability. The proposed algorithm starts by extracting a package dependency network from Java byte code and a community detection algorithm is used to find possible changes in package structures. In this work, the reasons for the importance of considering dependency directions while modeling package dependencies with graphs are also discussed, and a proof for the relationship between package stability and the modularity of package dependency graphs is presented that shows how modularity is in favor of package stability.
\paragraph{}
For evaluating the proposed algorithm, a tool for live analysis of software packages is implemented, and two software systems are tested. Results show that modeling package dependencies with directed graphs and applying the presented refactoring method, leads to a higher increase in package stability than undirected graph modeling approaches that have been studied in the literature. 
\\\\
\textbf{Keywords: }Graph clustering, Community detection, Package refactoring, Software metrics, Stability, Coupling, Cohesion.

\pagenumbering{Alph}
\tableofcontents

\listoffigures
\listoftables

\cleardoublepage\pagenumbering{arabic}
\chapter{Introduction to code refactoring}
\pagebreak
\paragraph{}
There are many properties that can be associated with good code. Sommerville describes good code as one that is highly maintainable, dependable, efficient and usable \cite{sommerville2004}. Truly reusable code is considered gold in the software industry as it significantly effects productivity and thus lowers costs \cite{lim1994} and without a doubt, good code is backed by a good design. A professional software engineer must first design a software and then implement the code based on the design. However, in real-world scenarios, the great attributes of a good software might fade away as the project grows. Tight schedules, high customer demands and the high number of programmers involved in large projects are considered as some of the reasons that make efficient and engineered implementations change into a mess. A mess that is not easily maintained, reused, changed or depended upon. Refactoring is considered the cure for this infiltration of the project. 

\paragraph{}
Refactoring is a common word for a day-to-day programmer with its origins in mathematics and ultimately in the Latin language. The root \textit{factor} has the meaning of \textit{maker} and hence refactoring is known as re-making something. In mathematics, when you factor an expression, you re-make it and provide a more cleaner version. The exact origin of the word, refactoring, in computer science is somewhat unknown, however the Forth language community is known to have been the first people to have used this expression \cite{fowler1999}. Chapter six of Leo Brodie's book, Thinking Forth is dedicated to the subject of refactoring \cite{brodie1984}. 
\paragraph{}
Martin Fowler, the author of one of the most canonical books on refactoring \cite{fowler1999}, describes it as \textit{``the process of changing a software system in such a way that it does not alter the external behavior of the code, yet improves its internal structure.''}
\paragraph{}
This thesis solely focuses on refactoring methods that involve the use of graph clustering methods, however to better understand the reasons and effects of proposed methods, a brief and concise explanation of known refactoring techniques is given.

\section{Well-known refactoring techniques}

\begin{itemize}
\item \textbf{Rename method.} This technique may be the most simple refactoring method one can use. Simply renaming identifiers and variables will make the code clearer, more understandable and can reduce the need for comments. An appropriate name for a method, variable or a class is one that is descriptive so that a new programmer can understand its work just by a glance.

\item \textbf{Inline temp.} Temporary variables can make methods longer and more complicated. It is suggested that temporary variables that are being used only once or are a result of a method call be completely removed and the value assigned to them be used in the code. An example is provided below.

\textbf{Incorrect:}
\begin{lstlisting}[language=python]
	def add_something():
		return 1 + 2
	
	def foo():
		temp_variable = add_something()
		print "The result is " + temp_variable
\end{lstlisting} 
\textbf{Correct:}
\begin{lstlisting}[language=python]
	def foo():
		print "The result is " + add_something()
\end{lstlisting} 

\item \textbf{Extract method. }Known as arguably the most important refactoring technique, Extract method aims at reducing the size of long methods by breaking them into smaller methods with descriptive variables. Many refactoring and simplifying techniques in software engineering involve breaking code and algorithms into smaller, more understandable chunks. This method is one of them. 

\begin{lstlisting}[language=python]
	class Foo:
		username
		def __init__(self):
			# Some initialization code
			self.username = "Some username"
	
		def func1(self):
			print "Welcome"
			print "You have logged in as " + self.username
			print "Something else"
			
		def func2(self):
			print "Welcome"
			print "You have logged in as " + self.username
			print "Some reports"
			
\end{lstlisting} 
\paragraph{}
In the provided example, lines 8 and 9 are equal to lines 13 and 14 and can be extracted into a new method that greets the user.
Extract method is considered as an important and basic refactoring technique that highly effects the cohesion of classes from which methods have been extracted. Extract method suggests the extraction of pieces of code that are used more than once (duplicate code). If this condition is met while extracting piece of code A from methods B and C, then after refactoring, both B and C will be using A and thus reducing the cohesion in their class. However, one must realize that if appropriate interfaces are not used in the code and other classes in a package use method A, then instead of reducing cohesion, coupling will be increased. A thorough study on this issue and a metric for finding appropriate pieces of code for extracting while considering the notion of cohesion is provided in \cite{goto2013extract}.

Considering our focus on graph clustering methods in refactoring, it is worth noting that some work has been done in detemining the class a method belongs to, with the help of community detection techniques \cite{pan2009class}. However, introducing new methods and extracting them with community detection is still in need of attention. 

\item \textbf{Inline method.} In some cases, the opposite of Extract method should be applied. Suppose method A is simple, clear and is being used only once, possibly in a stable class whose content is not likely to change. In this case, using an identifier for the code in method A only results in an extra call for no benefit. This method can be removed and its content can be used inline. 

\item \textbf{Replace method with Method object. } This technique can be considered as an aid, in situations where Extract method becomes difficult because of the high number of temporary variables in a long method. In a case where the number of temporary variables is high, Extract method can become a cumbersome task because passing around all the temporary variables between the extracted methods can prove to be messy and finding the needed temporary variables for a piece of extracted code can take a lot of time.

To resolve this issue, one approach is to move the long method into a new class, set the local temporary variables as class attributes and then apply Extract method. This method provides a better state, from which we can continue our refactoring using Extract method or other techniques. 

\item \textbf{Pull up method. }Imagine a scenario in which a piece of code is duplicated in two different classes, it is best to pull that code up into a super class of those two classes. 

\textbf{Before refactoring:}
\begin{lstlisting}[language=python]
	class Person:
		firstname = None
		lastname = None
		def __init__(self):
			# Some initialization code
			
	class Student(Person):
		studentNo = None
		def __init__(self):
			# Some initialization code
			
		def makeFullName(self):
			return self.firstname + " " + self.lastname
			
		def getStudentNo(self):
			return self.studentNo
			
	class Employee(Person):
		salary = None
		def __init__(self):
			# Some initialization code
		
		def makeFullName(self):
			return self.firstname + " " + self.lastname
			
		def getSalary(self):
			return self.salary
			
\end{lstlisting} 

\textbf{After refactoring:}
\begin{lstlisting}[language=python]
	class Person:
		firstname = None
		lastname = None
		def __init__(self):
			# Some initialization code
	
		def makeFullName(self):
			return self.firstname + " " + self.lastname
			
	class Student(Person):
		studentNo = None
		def __init__(self):
			# Some initialization code
			
		def getStudentNo(self):
			return self.studentNo
			
	class Employee(Person):
		salary = None
		def __init__(self):
			# Some initialization code
			
		def getSalary(self):
			return self.salary
			
\end{lstlisting} 

\item \textbf{Extract surrounding method. } Imagine a case in which several different methods are almost identical but with a slight difference in the middle of each one. In some languages, one can pull up the duplicated code into a new method and pass the middle section to the method which it yields to. This ability is provided in some languages like Ruby and can be simulated in some other languages by passing callback functions. A Ruby example is given below.

\begin{lstlisting}[language=ruby]
	def testMethod
		puts "Something printed from inside testMethod"
		yield
		puts "Something printed from inside testMethod"
	end
	testMethod {puts "Something printed from the block"}
\end{lstlisting} 

\item \textbf{Replace conditional with polymorphism.} This method of refactoring can help remove the complexity and code smell of conditional logic and demonstrate the principle of true object-oriented design. 

\end{itemize}

\chapter{Code quality and software metrics}
\thispagestyle{empty}	
\pagebreak
\paragraph{}
Code quality is one of the most important factors that directly effects a project's maintainability phase. The quality of a good code determines how flexible a project is and determines the limits of the resuability of its components. Good code can be changed more quickly and newcomers to the project can understand it more easily. Another important feature that code quality can greatly effect, is how much we can trust that new changes will not cause unexpected effects and will not introduce bugs to the project \cite{baggen2012standardized}. 

\paragraph{}
Although code quality is considered as an important subject in software projects, many developers and projects simply neglect its importance and features. This is normally due to tightened schedules, high customer demands and low budgets. 

\paragraph{}
Code quality has been discussed in the literature for a long time and more specifically, quantitative measurement of quality factors and providing metrics has been greatly studied. However, the study of software metrics has been diverse and the need for a strong and refined approach is felt \cite{kitchenham2010s}. In this thesis we focus on three well-known object-oriented quality factors, namely coupling, cohesion and package stability. 

\section{Coupling and cohesion}
\paragraph{}
Coupling is one of the most famous internal product attributes. Generally, two pieces of code are said to be coupled if changes in one causes the other to change. In the object-oriented paradigm, coupling between two classes is considered a bad and unwanted attribute, however a system with no coupling between its classes would mean that interaction is not occurring between the classes and therefore it would simply fail to function. 
\paragraph{}
Cohesion, which almost always comes with coupling, is another important internal product attribute. In an object-oriented system, a class is said to have a high cohesion if its internal structures and methods have high connectivity with themselves. The goal in a good design is high cohesion and low coupling, meaning that classes should be cohesive and therefore fully related to their responsibility while they have a low coupling with other classes so that they can change without causing too many changes in other parts of the system. Designs with high cohesion and low coupling make the system more reliable and maintainable \cite{fenton1998software}, \cite{troy1981measuring}.

\paragraph{}
The notions of coupling and cohesion have been excessively studied in the literature and many metrics have been proposed for measuring them. This thesis briefly surveys different approaches in the literature. 

\subsection{Basic definitions by Myers}
\paragraph{}
Myers, Stevens and Constantine introduced the concept of coupling in procedural programming. Based on this, Fenton defined six different levels of coupling \cite{alghamdi2008measuring}. These levels of coupling are shown below from worst to best. 

\begin{itemize}
\item \textbf{Content coupling. }If one element branches into or changes the internal statements of another element, they are said to have content coupling. 
\item \textbf{Common coupling. }If two elements refer to the same global variable, they are said to have common coupling. 
\item \textbf{Control coupling. }If the data that one element sends to the other controls its behavior, then control coupling is implied. 
\item \textbf{Stamp coupling. }Two elements are stamp coupled if they send more information to each other than necessary. 
\item \textbf{Data coupling. }If two elements communicate with each other by parameters with no control coupling, then they are data coupled. 
\item \textbf{No coupling. }If two elements have no communication with each other then they are not coupled. 
\end{itemize}

\subsection{Fenton and melton's metric}
\paragraph{}
Fenton and melton proposed a metric for coupling which is expressed as

\begin{equation}
\label{eq:fenton}
C(x, y) = i + n / (n+1)
\end{equation}

where $n$ is the number of interactions between the two components $x$ and $y$ and $i$ is the level of the worst coupling found between $x$ and $y$. In their metric, the coupling level is based on Myer's classification. No coupling is given a coupling level of 0 and the next levels have higher numeric values.

\paragraph{}
Alghamdi discusses several important points about this metric \cite{alghamdi2008measuring}.

\begin{itemize}
\item All types of interconnections are considered equal, with equal effects on coupling. 
\item The Fenton and Melton metric is an example of a inter-modular metric, meaning that it calculates the coupling between a pair of components in contrast with intrinsic metrics that measure the coupling of a component individually. 
\item Coupling values approach the next level when the interconnections between two components increases. 
\end{itemize}

\paragraph{}
Alghamdi also proposes a new coupling metric based on a description matrix of the system \cite{alghamdi2008measuring}.

\subsection{Chidamber and Kemerer's suite}
\paragraph{}
Chidamber and Kemerer \cite{chidamber1991towards} gives the first formal definition of coupling by defining coupling as any evidence that a method from one class uses a method or variable of another class. In their proposed suite, known as the CK suite, Chidamber and Kemerer give provide different metrics. The six metrics are as follows.

\begin{itemize}
\item Weighted Method per Class (WMC)
\item Number Of Children (NOC)
\item Depth of Inheritence Tree (DIT)
\item Coupling Between Objects (CBO)
\item Lack of Cohesion in Methods (LCOM)
\item Response For a Class (RFC)
\end{itemize}

\paragraph{}
Among their six metrics, CBO (Coupling Between Objects) is proportional to the number of non-inheritance related couples with other classes. For measuring coupling, CBO aggregates the total number of couples a class has to another class, which implies that different couples have the same strength and effect. Hitz and Montazeri \cite{hitz1995measuring} argue that the CK suite does not fully conform to measurment theory. 

\subsection{Alghamdi's coupling metric}
\paragraph{}
Alghamdi's approach is based on the idea of generating a description matrix of all the factors that effect coupling and then calculating a coupling matrix based on the collected data. An overview of this approach is depicted in Fig. \ref{alghamdi}

\begin{figure}[ht!]
\centering
\includegraphics[width=.7\columnwidth]{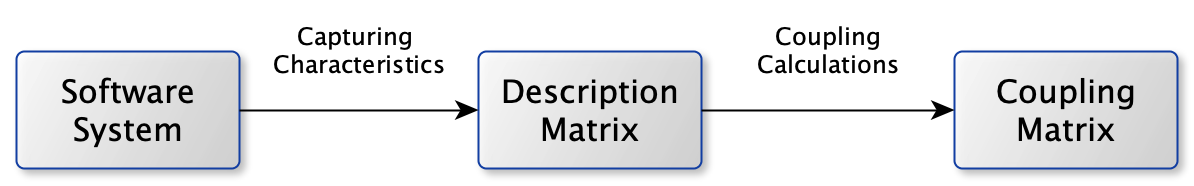}
\caption{An overview of Alghamdi's coupling metric}
\label{alghamdi}
\end{figure}

\paragraph{}
The description matrix is an $m$ by $n$ matrix where $m$ is the number of system components and $n$ is the number of component members. In an object-oriented system, components are represented by classes and members are class variables and methods. An example of a description matrix is depicted in Table \ref{desc_matrix}.

\begin{table}
	\caption{An example of Alghamdi's description matrix}
	\label{desc_matrix}
	\centering
    \begin{tabular}{|l|l|l|l|l|}
    \hline
    \textbf{Component} & $E_1$  & $E_2$ & ... & $E_n$ \\ \hline
    $C_1$        & $d_{11}$ & $d_{12}$  & ~   & $d_{1n}$  \\ \hline
    $C_2$        & $d_{21}$ & $d_{22}$  & ~   & $d_{2n}$  \\ \hline
    ...       	& ~   & ~  & ~   & ~  \\ \hline
    $C_m$        & $d_{m1}$ & $d_{m2}$  & ~   & $d_{mn}$  \\ \hline
    \end{tabular}
\end{table}

\subsection{A qualitative approach to coupling and cohesion}
\paragraph{}
While many quantitative approaches for measuring coupling and cohesion have been proposed in the literature, few qualitative approaches have been discussed. Kelsen \cite{representational2003} proposes an interesting information-based method for analyzing coupling and cohesion and finding refactoring suggestions. Kelsen's approach considers a special type of coupling, namely representational coupling. When an object calls a method of another object, some information about the callee is exposed. If the information is about the low level implementation of the callee, then representational coupling is high. If the call exposes higher level information, then representational coupling is low. Many metrics in the literature, including the works of Chidamber and Kemerer \cite{chidamber1991towards} can not capture representational coupling \cite{representational2003}. The main reason behind this issue is that many works simply count different types of interactions and assign ordinal numbers to these interactions. Kelsen, also presents a minimum for the representational coupling inherently contained in a system, which is known as intrinsic representational coupling. 

\paragraph{}
Kelsen's approach is based on the idea that if one can find two states in a system, namely \textit{witness states}, that yield different messages between objects but cannot affect the states of other objects, then this indicates that coupling can be improved and representational coupling is higher than it should. The elevator example below is borrowed from \cite{representational2003}.

\paragraph{}
Suppose that the behavior of some elevators in a building is modeled using two classes, ElevatorControl and Elevator. Every elevator has two methods, \textit{direction()} and \textit{position()}, which return the direction and position of the elevator. The ElevatorControl class is responsible for handling requests	and checks every elevator's distance and position for finding the closest elevator for a request. Two different implementations for ElevatorControl's \textit{handleRequest} method can be written. 

\begin{algorithm}[h!]
\caption{Implementation 1 for ElevatorControl.handleRequest}\label{alg:qualitative_imp1}
\begin{algorithmic}[1]	
\Procedure{handleRequest}{Request r}	
\State $minDist \gets Infinity$
\State Elevator $ec \gets null$ //reference to closes elevator
\For {every elevator $e$}
	\State $d \gets $ Compute $distance(e, r)$ using $e.direction()$ and $e.position()$
	\If {$d < minDist$}
		\State $ec \gets e$
		\State $minDist \gets d$
	\EndIf
\EndFor
\If {$ec \text{ is not } null$}
	\State $ec.addRequest(r)$
\EndIf
\EndProcedure
\end{algorithmic}
\end{algorithm}

\begin{algorithm}[h!]
\caption{Implementation 2 for ElevatorControl.handleRequest}\label{alg:qualitative_imp2}
\begin{algorithmic}[1]	
\Procedure{handleRequest}{Request r}	
\State $minDist \gets Infinity$
\State Elevator $ec \gets null$ //reference to closes elevator
\For {every elevator $e$}
	\State $d \gets e.computeDistance(r)$
	\If {$d < minDist$}
		\State $ec \gets e$
		\State $minDist \gets d$
	\EndIf
\EndFor
\If {$ec \text{ is not } null$}
	\State $ec.addRequest(r)$
\EndIf
\EndProcedure
\end{algorithmic}
\end{algorithm}

\paragraph{}
In the second implementation, the task of computing en elevator's distance is given to each elevator with the $computeDistance$ method. It is clear that the $ElevatorControl$ class does not need to know the distance and position of every elevator and only needs their distances, therefore less information from the $Elevator$ class is being exposed in the second implementation and representational coupling is decreased. 

\paragraph{}
Kelsen's qualitative approach may be considered a precise and great method for measuring representational coupling, however, because of its non-quantitative nature it is not clear if it can be applied to large, real-life software systems with many classes, and its utilization in real-life scenarios is currently considered as an open problem. 

\section{Stability}
\paragraph{}
Stability is the amount of likeliness, that a class or a package will not change. Stability is inherently difficult to measure because the future changes and needs of a project are not well known, however some metrics exist that try to measure stability. The importance of this stability in software metrics was first mentioned by Hitz and Montazeri \cite{hitz1995measuring}. 

\paragraph{}
Some methods for measuring the stability of a software package, utilize the history of the class's changes in the past and try to predict its future. The changes of a class or a package is typically accessed through version control systems such as Git\footnote{\url{http://git-scm.com}} and Subversion\footnote{\url{http://subversion.apache.org}}, however these approaches can not be used in early stages of software design because of the lack of change history available at the time.

\paragraph{}
Robert Martin \cite{martin2003agile} takes a different approach to measuring the stability of a software package. He believes that stability is proportional to responsibility and a package is said to be responsible and independent if many other entities depend on it, while it doesn't depend on others itself. A package $p$ is said to be irresponsible and thus unstable, if it depends on many other entities, meaning that if they change they cause $p$ to change as well. By Martin's definition, in Fig. \ref{stable_package}, X is an example of a stable package and in Fig. \ref{unstable_package}, Y resembles an unstable package. 

\begin{figure}[ht!]
\centering
\includegraphics[scale=0.3]{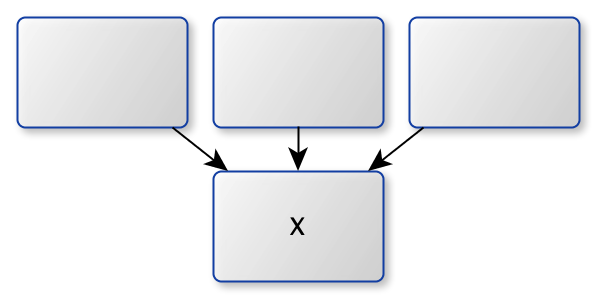}
\caption{An example of a stable package}
\label{stable_package}
\end{figure}

\begin{figure}[ht!]
\centering
\includegraphics[scale=0.3]{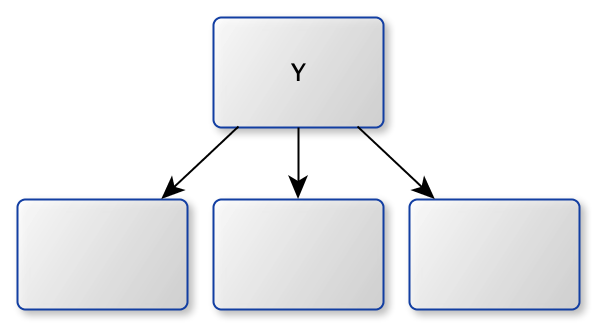}
\caption{An example of an unstable package}
\label{unstable_package}
\end{figure}

\paragraph{}
As a metric for stability, Martin defines the instability of a package as given in Eq. \ref{eq:Instability} where $I$ is instability, $C_a$ is afferent couplings and $C_e$ represents the number of efferent couplings. Afferent couplings is the number of classes outside the package that depend on classes within the package and efferent couplings is the number of classes within the package that depend on outside classes. 

\begin{equation}
\label{eq:Instability}
I=\frac{C_e}{C_a + C_e}
\end{equation}

\paragraph{}
If a package $p$ has an \textit{instability} of 0, then the $p$ has maximum stability and if the package holds a value of 1 for \textit{instability}, then it would mean that the number of afferent couplings is 0 and therefore $p$ depends on other packages while no other package depends on $p$ and this would make it an extremely unstable package.

\paragraph{}
Martin also proposes the Stable Dependencies Principle (SDP) that helps the software design process by ensuring that modules that should be easily changeable not depend on modules that are harder to change \cite{martin2003agile}. In this case, packages should always have a higher $I$ metric than the ones they depend on. Concenting to this principle, one would be able to see a tree of packages, in which stable ones are placed at the bottom and the most unstable ones are at the top. The benefit of this approach is that packages that are violating SDP can be easily spotted. Any package depending on a package above it, would mean a violation of the principle. 

\paragraph{}
It is important to note that not all packages should or could be fully stable, as this would cause an unchangeable and inflexible system. Also, not all packages can be unstable as this would create an irresponsible system with a large number of connections and a high coupling. It is clear that pieces of code that are likely to change should be placed into unstable packages and pieces of code that are not very likely to change in the future should be placed in stable packages. Martin argues that high level design can not be placed in unstable packages because it resembles the architectural decisions of the projects, however if high level code is placed in stable packages then it would almost be impossible to change it after the project becomes more mature and more pieces of code start depending on it. The solution to this dilemma is the use of abstract classes that can introduce the flexibility and flow of stability that is needed. The basic idea behind the Stable Abstraction Principle (SAP) is that a package has to be as abstract as it is stable. This principle ensures that  the stability of a package does not contradict its flexibility. The SAP proposes a metric for measuring the abstractness of a package which is a simple ratio and is shown in Eq. \ref{eq:abstractness} in which $N_a$ is the number of abstract classes inside the package and $N_c$ is the number of classes inside the package. 

\begin{figure}[ht!]
\centering
\includegraphics[scale=0.4]{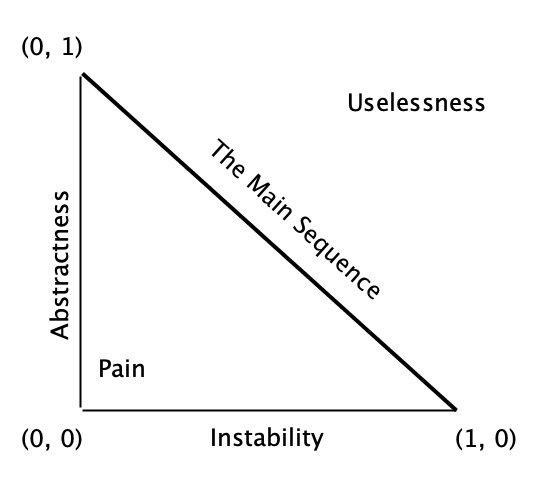}
\caption{The relationship between asbtractness and instability}
\label{abstract_stable_graph}
\end{figure}

\paragraph{}
Martin defines three important areas in the relationship between abstractness and stability. If we set abstractness (A) as the vertical axis and instability (I) as the horizontal axis in a cartesian graph, then three spots depicted in Fig. \ref{abstract_stable_graph} are as follows.

\begin{itemize}
\item \textbf{Zone of pain.} The zone of pain is where a package is highly stable and yet its abstractness is zero. Such a package is hardly changeable.
\item \textbf{Zone of uselessness.} A package in this zone is highly abstract and also highly unstable and not depended on. This means that its abstractness is useless.
\item \textbf{The main sequence.} This is the ideal point for a package. A package near the main sequence is a package that conforms to the SAP and is as abstract as it is stable. The sequence is ideal and thus not many packages can truly be placed on this line, however the distance of a package from this ideal line can be measured. 
\end{itemize}

\begin{equation}
\label{eq:abstractness}
A = \frac{N_a}{N_c}
\end{equation}

\begin{equation}
\label{eq:abstract_stable_distance}
D = \frac{|A + I - 1|}{\sqrt{2}}
D' = |A + I - 1|
\end{equation}

\paragraph{}
In Eq. \ref{eq:abstract_stable_distance}, $D$ is the distance from the main sequence and $D'$ is its normalized version that ranges from $[0, 1]$.

\chapter{Community detection and applications}
\pagebreak
\paragraph{}
Community detection is defined as the process of finding communities of nodes in networks, such that the nodes inside a community have a higher property resemblance to one another compared to nodes in another community. A network is a graph with a pair of sets, $V$ and $E$, where $V$ is the set of all vertices and $E$ is the set of all edges in the network. Every community in a network is considered as a partition of the set $V$.
Typically, the process of detecting communities in a network consists of the following steps.

\begin{enumerate}
\item Specifying a quality measure that defines the quality of a partitioned graph. 
\item Using a specific method to assign nodes to different communities (clusters) in a way that increases the quality measure in step 1. 
\end{enumerate}

There are normally three different terms, related to the subject of community detection that are sometimes used interchangeably by mistake, thereforee distinguishing between these terms is needed before going into the details of each method. 

\begin{itemize}
\item \textbf{Graph partitioning vs Community detection.} The most important difference is that the problem of graph partitioning is universally defined as a problem where the number and sizes of the clusters are specified a priori. This is not the case in graph clustering or cluster analysis in general. The second, less important difference between these two terms is that clustering excludes the possibility of overlap by convention, so that it is still possible to speak of an overlapping clustering, whereas a partition or partitioning excludes the possibility of overlap by definition. 

\item \textbf{Graph clustering vs Community detection.} Graph clustering and community detection are normally used interchangeably in the litrature and in this thesis.

\end{itemize}

\begin{figure}[ht!]
\centering
\includegraphics[scale=0.3]{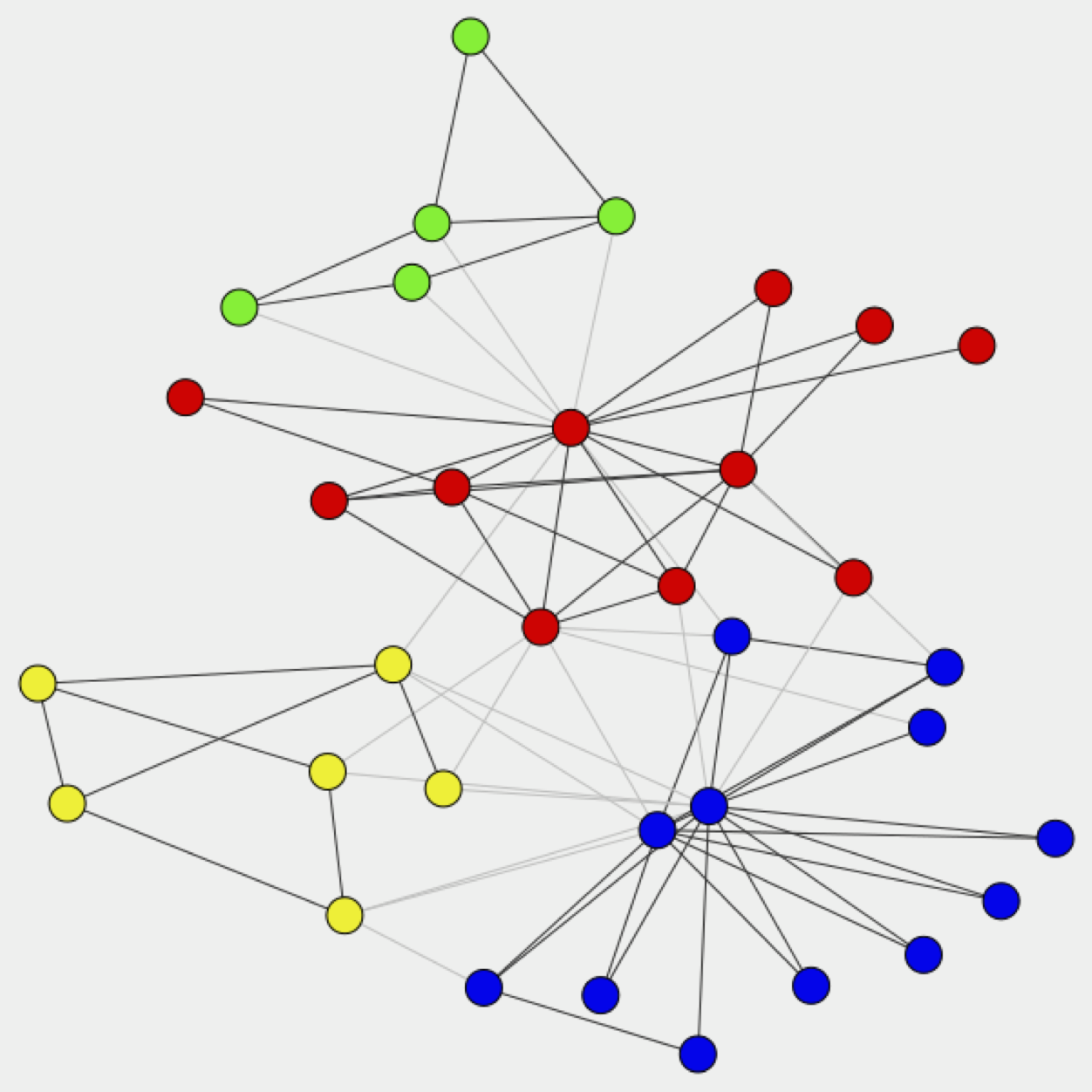}
\caption{The Zachary club}
\label{zachary_club}
\end{figure}

\paragraph{}
Many different community detection and graph partitioning algorithms have been proposed in the literature, some of which will be briefly discussed in this thesis. 

\section{Classification of clustering methods}
\paragraph{}
Graph clustering methods are normally difficult to classify, however Wiggerts \cite{wiggerts1997using} believes that they can generally be divided into the following methods.

\begin{itemize}
\item \textbf{Hierarchical methods.} Hierarchical approaches are known as some of the early solutions to the problem. These methods provide a hierarchy of partitions like a tree, known as a dendrogram. A sample dendrogram is depicted in Fig. \ref{zachary_club_dendrogram}. Hierarchical methods are themselves divided into the two groups of agglomerative approaches and divisive approaches. In agglomerative approaches, the algorithm starts with placing every node inside a separate cluster. Then the algorithm starts merging the clusters based on their similarity. It is important to note that the algorithm will not stop unless told to, thereforee knowing the number of wanted partitions in the result is crucial. In divisive hierarchical approaches, the algorithm starts with a single cluster that contains all the nodes of the graph. The algorithm then splits the cluster based on the similarity between the nodes, keeping the similar ones in the same cluster. 
Different hierarchical algorithms are distinguished by their distance function which is responsible for determining the similarity between two given nodes.

\item \textbf{Optimization based methods.} These algorithms generally take an initial inaccurate clustering and with the help of a quality measure, try to enhance and improve the cluster and optimize the quality. One of the most common and famous quality measures in the literature is the modularity measure proposed by Girvan and Newman \cite{chen2013new}. Various kinds of optimization techniques are applicable in this category of graph clustering algorithms, such as genetic algorithm based optimization methods, particle swarm methods, etc. A simple genetic algorithm approach can be like the following \cite{shtern2012clustering}.

\begin{enumerate}
\item Select a random population of partitions
\item Generate a new population by selecting the best according to a quality measure, such as Newman's modularity
\item Repeating step 2 until a certain criteria is met
\end{enumerate}

\item \textbf{Graph theoretical based methods.} Graph theoretical algorithms are methods that utilize the formal descriptions and properties of graphs and their respective subgraphs. In these methods, various subgraphs and properties are used to extract meaningful clusters from the original graph. Two important and common types of graph theoretical algorithms exist, namely aggregation algorithms and minimal spanning tree algorithms. Aggregation algorithms use the function of reduction on different nodes and merge them in each step. Different potential nodes for merging are chosen using different techniques, such as neighbourhoodness, strong connections and etc. Minimal spanning tree algorithms use the minimal spanning tree of the graph. These algorithms are normally not considered accurate as they tend to create large clusters, however some enhanced versions of these algorithms have been suggested in the literature \cite{shtern2012clustering}.

\item \textbf{Construction algorithms.} These algorithms assign nodes into clusters in one pass. The bisection algorithm and density search techniques are considered as examples of such methods. 

\end{itemize}

\begin{figure}[ht!]
\centering
\includegraphics[scale=0.3]{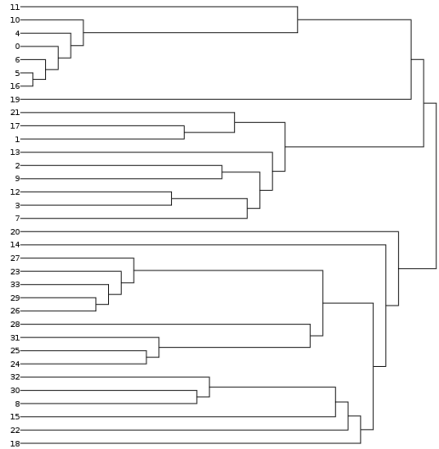}
\caption{A dendrogram for the Zachary club}
\label{zachary_club_dendrogram}
\end{figure}

\paragraph{}
The minimum cut approach is the most obvious and the most easiest way of tackling the problem of community detection. In this method, one tries to find two groups/partitions in a graph for which the edges connecting the two is the least. This approach mostly falls in the area of graph partitioning, because the number of partitions in the end result must be known a priori so that one can know how many times the algorithm should be applied. It is worth noting that if the minimum cut approach were to be used with no constraint, then a trivial solution to the problem would be to place all vertices in one partition only, thus minimizing the number of edges between partitions. Clearly this solution would not give any information on the \textit{communities} in a network. In the software engineering sense, the result of such a method would be a system with zero coupling and maximum cohesion, which seems the goal. However many important aspects of the software such as reusability, separation of concerns, object orientedness, flexibility, etc. will be lost. This raises the idea that maybe another measurement apart from coupling and cohesion is needed that can help find an optimum point for the two. This measure must be able to truly model and represent different objects in a software dependency network. In the subject of graph theory, a measure that can model the \textit{goodness} of a partition is known as a quality measure. Using a community quality measure in the field of software engineering has only recently been discussed in the literature \cite{pan2009class}, \cite{pan2013refactoring}. 

\section{Quality measures}
\paragraph{}
The quality of a partition found by a community detection algorithm is determined with a quality measure. This measure should show how good a partition is. Many algorithms provide many partitions without equal goodness, therefore it is absolutely necessary to measure the quality of the provided partitions and detect the best. Quality functions give a number to each partition so that the partitions can be ranked and compared to one another. Arguably, the most common and famous quality function is Newman and Girvan's Modularity \cite{newman2004finding}. 

\paragraph{}
Modularity is based on the idea that a random graph contains no meaningful community. Based on this idea, if one can make a similar graph to the one being analyzed with the same number of vertices, edges and degrees but with edges placed at random, then by comparing it to the original graph one can find the major differences that have created communities. To understand the notion of modularity, we start by another measure for the goodness of a partition and build on it. Let $G$ be a graph with elements of its adjacency matrix presented as $A_{vw}$, where $A_{vw}$ is 1 if nodes $v$ and $w$ are connected and 0 otherwise, and $C_v$ being the community in which vertex $v$ belongs to. The following measure shows the fraction of edges in graph $G$, that fall within communities.

\begin{equation}
\label{eq:Fraction of edges in the same community}
\frac{\sum_{vw} A_{vw}\delta(C_v, C_w) } {\sum_{vw}(A_{vw}) } = \frac{1}{2m} \sum_{vw}A_{vw}\delta(C_v, C_w)
\end{equation}

where $\delta$ is the Kronecker delta function and $m$ is the number of edges in the graph. 

\paragraph{}
This fraction takes the value of 1 when all edges fall in one community and hence is not a good enough measure.

\paragraph{}
The idea behind modularity is that a random graph does not have a meaningful community structure and thus, if generated carefully, should provide a good point of comparison. Carefully generating a random graph that can depict the features and properties of the original graph but with no meaningful community is known as providing a null model in the area of complex systems. In this case, one can provide a graph which has the same amount of vertices, edges and vertex degrees while its edges are rewired randomly, so that the graph looses its community structure. In such a graph, the probability of an edge being in between vertices $v$ and $w$, if connections are made at random is calculated as below.

\begin{equation}
\label{eq:The probability of an edge between v and w}
\frac{k_vk_w}{2m}
\end{equation}

where $k_v$ and $k_w$ are the degrees of vertex $v$ and $w$ respectively. Now, by using equations \ref{eq:Fraction of edges in the same community} and \ref{eq:The probability of an edge between v and w}, one can calculate the modularity measure as

\begin{equation}
\label{eq:Modularity}
Q=\frac{1}{2m}\sum_{vw}[A_{vw} - \frac{k_vk_w}{2m}] \delta(C_v, C_w).
\end{equation}

\paragraph{}
By looking at Eq. \ref{eq:Modularity}, one can see some important aspects of this measure. The Kronecker delta function makes sure that a connection between two graph nodes in two different communities makes no contribution to modularity. Two connected nodes inside a community, make a positive contribution to modularity and the contribution is inversely proportional to the degrees of the two nodes. Also two nodes that are not connected, yet still reside in one community provide a negative contribution to the overall modularity of the clustering. 

\section{A brief discussion of well known clustering methods}
In this section, several common graph clustering methods are briefly studied.
\subsection{The fast greedy method}
\paragraph{}
A typical greedy method for clustering a graph while utilizing Newman's modularity consists of the following steps.
\begin{enumerate}
\item Start with each vertex in its own community, thus having $n$ communities for $n$ vertices.
\item In each step, merge two communities whose join makes the highest increase in modularity $Q$.
\item After $n-1$ joins, one community remains and a dendrogram can be created. 
\item Take the clustered solution that has the highest Q. 
\end{enumerate}

\paragraph{}
The simple greedy method, can waste a good deal of time when dealing with sparse graphs. In the implementation of the simple greedy approach, one has to merge many columns and rows of the sparse adjacency matrix and consequently time and space is wasted on merging elements with the value of 0. 
For this reason, Clauset and Newman have presented an enhanced version of the greedy method, namely the fast greedy method \cite{clauset2004finding} which performs much better than many other algorithms in the literature. In the fast greedy method, some data structures such as max heaps and balanced binary trees are used with some alterations in the algorithm that results in the runtime of $|V||E|log(|V|)$.

\subsection{The edge-betweenness based method}
\paragraph{}
The edge betweenness based method, proposed by Girvan and Newman \cite{girvan2002community} before presenting the modularity measure, is a graph clustering algorithm that focuses on the edges that are between communities in contrast to many other older algorithms that focus on the connections inside a community. Edge betweenness is described as the number of shortest paths between pairs of vertices that run along it. The algorithm for this method is as follows.

\begin{enumerate}
\item Calculate edge betweeenness for all edges
\item Remove the edge with the highest betweenness value
\item Recalculate edge betweenness for the rest of the edges
\item Repeat step 2 until no edges remain
\end{enumerate}

\paragraph{}
Calculating betweenness for all $m$ edges and $n$ vertices of a graph can be calculated using Newman's algorithm for betweenness \cite{girvan2002community} which can be calculated in time $O(mn)$. Edge betweenness has to be recalculated for every edge removal and thus the algorithm can work in time $|N||M|^2$.

\subsection{The walktrap based method}
\paragraph{}
The walktrap method is based on the notion of random walks \cite{pons2005computing}. The main idea behind the walktrap method is that random walks in a graph tend to get trapped in dense parts of the graph which could represent communities. In the walktrap method, a distance $r$ between communities is calculated based on the properties of random walks. After this step, typically an agglomerative algorithm is used to merge communities and create a dendrogram, much like other methods. This algorithm has a runtime of $|E||V|^2$.

\subsection{The leading eigenvector based method}
\paragraph{}
The leading eigenvector algorithm utilizes the eigenvalues of the modularity matrix. In this algorithm one determines the eigenvector corresponding to the most positive eigenvalue of the modularity matrix and divide the network into two groups according to the signs of the elements of this vector. 

\section{Community detection for directed graphs}
\paragraph{}
Community detection in directed networks is a difficult task \cite{malliaros2013clustering}. Various algorithms for community detection in undirected graphs have been presented in the literature, however methods for directed approaches have been less common. A comprehensive survey of community detection methods for directed graphs can be found in \cite{malliaros2013clustering} by Malliaros et al. They propose the following classification for community detection approaches in directed graph.

\begin{enumerate}
\item \textbf{Naive graph transformation approach.} In this method, directions are simply removed from the graph and undirected community detection techniques are applied. 
\item \textbf{Transformations maintaining directionality.} In this category of methods, the graph is transformed to an undirected version while directionality is maintained using other methods. The original graph can be tranformed to a unipartite weighted graph or a bipartite graph for this approach. An overview of such transformations is depicted in Fig. \ref{transformation}.
\item \textbf{Extending objective functions and methodologies in directed graphs.} Many objective functions and quality measures used in undirected graphs can be extended to directed versions, i.e. modularity, spectral clustering, page rank and random walk methods, local density clustering. 
\item \textbf{Alternative approaches.} Other methods that can not be placed in the first three categories also exist. Such as information theoretic approaches and blockmodeling approaches.  
\end{enumerate}

\begin{figure}[ht!]
\centering
\includegraphics[width=.8\columnwidth]{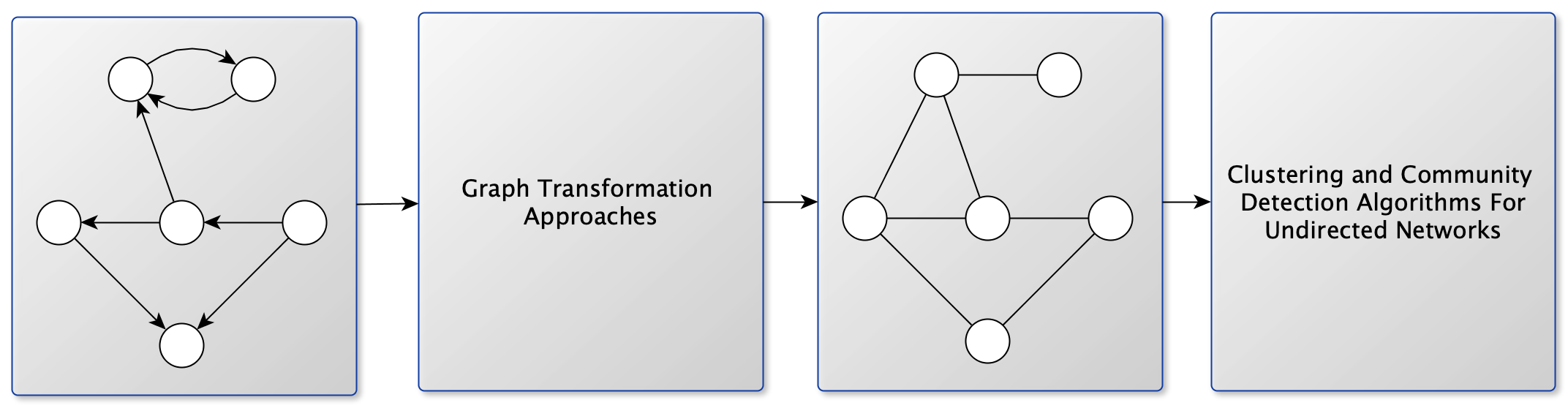}
\caption{An example of a transformation that preserves directionality. }
\label{transformation}
\end{figure}

\paragraph{}
Although some algorithms exist for this purpose, many clustering algorithms for undirected graphs can be extended for directed graphs with the help of a direction-compliant quality measure. Several extensions of modularity for directed graphs have been proposed in the literature. Arenas et al \cite{arenas2007size} proposed an extension of modularity. Their idea is based on the fact that in a directed graph $G$, if vertex $i$ exists with more out-links and vertex $j$ exists with more in-links, then it is more probable that in a random rewiring a link be found from $i$ to $j$ rather than the opposite. Considering the original idea of modularity, this suggests that if an edge is found from $j$ to $i$, then this edge is contributing to a community structure more than $i$ to $j$ would, simply because it is more suprising and less random. By this definition, modularity can be altered for directed networks by changing the null model to a graph with the same number of vertices, edges, out-links and in-links as the original graph. The equation for modularity $Q$ in a graph with the adjacency matrix $A$ and $m$ number of edges can then be expressed as

\begin{equation}
\label{eq:Modularity_directed}
Q=\frac{1}{m}\sum_{ij}[A_{ij} - \frac{k_i^{out}k_j^{in}}{m}] \delta(C_i, C_j)
\end{equation}

where $\delta$ is the Kronecker delta function, $C_i$ and $C_j$ denote the communities that nodes $i$ and $j$ belong to, and $k_i^{out}$ and $k_j^{in}$ are the number of vertex $i$ and $j$'s out-links and in-links respectively.

\section{Applications of community detection in software engineering}
\paragraph{}
Graph clustering is widely used in the literature as a method for finding meaning in a structure. This need for finding meaning in a complex system is generally used in four main areas of software engineering.

\subsection{Reflexion} 
\paragraph{}
Reflexion is the art of bridging the gap between software and humans, when it comes to analyzing a legacy system. Reflexion analysis tries to build an understandable high level abstraction of a large system, given the source code. In the process, the source code is analyzed and mapped to a new higher level model. This cumbersome task is typically done manually, however graph clustering can be used in semi automated mappings of source code to entities with the help the user’s knowledge about the system. Some related work has been presented in the literature \cite{murphy1995software}, \cite{mancoridis1999bunch}.

\subsection{Refactoring}
\paragraph{}
There are many properties that can be associated with good code. Sommerville describes good code as one that is highly maintainable, dependable, efficient and usable \cite{sommerville2004}. Truly reusable code is considered gold in the software industry as it significantly effects productivity and thus lowers costs \cite{lim1994} and without a doubt, good code is backed by a good design. Refactoring is the art of improving the internal structure of code while leaving the outer side intact \cite{fowler1999}. One of the problems that has been tackled in the literature is refactoring large and complicated legacy systems and also analyzing the structure of new code. Graph clustering techniques can be considered a good method for finding the correct structure and packages of a large system by analyzing the relationships in a software's dependency graph. Some work has been done in the area of refactoring at the class level, using graph clustering algorithms \cite{pan2009class}. Recently, some work has also been presented in the package level \cite{pan2013refactoring}, however the lack of an accurate package analysis tool that considers important object oriented aspects, such as stability and reusability is strongly felt in the literature. 

\subsection{Parallel computing}
\paragraph{}
Task to processor mappings is considered an important problem in parallel environments. The two general strategies used in such problems is placing tasks that can run concurrently on different processors, while keeping tasks that need many communications on the same processors, in order to increase locality. Graph partitioning tools have been used in some cases to map tasks to hypercube structures \cite{sadayappan1990cluster}.

\subsection{Ontologies and concept grouping}
\paragraph{}
One of the areas that highly utilizes graph clustering methods is ontologies and the semantic web. Various applications have been presented in the literature. One important application is extracting new concepts and taxonomies from ontologies. Extracting more generalized concepts and relations is one of the outputs of an ontology clustering. Tang et al. presents a great survey on such methods \cite{tang2012survey}. Modularization is also considered important for the problem of ever growing and over grown ontologies. The works in \cite{ghafourian2013modularization} is one of the most recent methods in this specific area. 

\section{Partition stability}
\paragraph{}
In some works the notion of partition stability, also known as robustness is considered as an important property of a good clustering algorithm. The idea is that a stable partition is one that can be recovered even if the structure of the graph is modified, as long as the change in the graph is not too extensive. It important to stress that this thesis only studies stability in the software package sense of the word and does not cover cluster stability. 

\chapter{Refactoring packages using community detection}
\pagebreak
\paragraph{}
Several studies have attempted to use community detection methods or cluster analysis in order to find refactoring opportunities \cite{pan2009class}, \cite{pan2013refactoring}, \cite{melton2006identifying}. These methods have analyzed the code in three main levels.

\begin{itemize}
\item Method level
\item Class level
\item Package level
\end{itemize}

\paragraph{}
This thesis focuses on refactoring at the package level for which there has been little discussions in the literature. Pan et al. \cite{pan2013refactoring} proposes a novel method for refactoring using the notion of modularity, however neglects the use of a directed clustering approach. In this chapter, the importance of a directed graph model is discussed with regard to the notion of package stability and an improved version of a package refactoring method using community detection is presented. 

\section{Basics of modeling packages with graphs}
\paragraph{}
As discussed in previous chapters, many metrics have been proposed for different software properties at the class level. At the package level, which is in a higher level in the abstraction hierarchy compared to a class, the most important property in the literature is the dependency between two packages. When a class inside a package depends on a class from another package, the former package is said to depend on the latter. 

\paragraph{}
Let $G$ be a graph with the adjacency matrix $A$. Vertices in $G$ represent classes and edge $E_{ij}$ between vertex $i$ and vertex $j$ resembles a dependency between the two classes. Communities in this graph represent package structure. A dependency between two classes can be any usage of methods or variables or inheritance. Classes are being modeled to graph vertices for the sole purpose of using community detection methods for finding appropriate clusters which represent packages and different relationships between classes are not considered different. 

\paragraph{}
A thorough metric for package dependencies has been proposed in \cite{gupta2009package} by Gupta et al, which takes into account the different types of connections between packages when sub-packages also exist in the software. The metric is validated using Briand's evaluation criteria \cite{briand1996property}. Gupta et al consider two classes of two packages connected if any of the following relationships are found between them. 

\begin{itemize}
\item Aggregation relationships between two classes, i.e., one class's attribute has the type of another class
\item Class inheritence or implementing interfaces
\item Method invocation of one class by the method of another class
\item A class's method referencing an attribute from another class
\item A class's method has a parameter of the type of another class
\item A class's method has a local variable of the type of another class
\item A class's method invoking a method having a parameter of the type of another class
\end{itemize}

\paragraph{}
By Gupta et al's metric, coupling between two packages $p_a^i$ and $p_b^i$, where $i$ denotes the hierarchical level, is expressed as

\begin{eqnarray*}
\label{eq:coup}
Coup(p_a^i, p_b^i) &=& \left\{\begin{matrix}
 0, & (n=0, m=0)\\ 
 \sum_{x=1}{n}\sum_{y=1}^{m}r(e_x^{i+1}, e_y^{i+1}) \\ 
 + \sum_{y=1}^{m}\sum_{x=1}^{n} r(e_y^{i+1}, e_x^{i+1}), & (n \geq 1 \wedge m \geq 1)
\end{matrix}\right.
\end{eqnarray*}

where $n$ and $m$ are the number of elements of package $p_a^i$ and $p_b^i$ respectively at hierarchy level $i+1$, and $r$ is the binary connection between elements. An example of different hierarchical levels given in \cite{gupta2009package} is depicted in Fig. \ref{hierarchy_levels}. The binary connection between elements ($r$) can be calculated as

\begin{eqnarray*}
\label{eq:r_gupta}
r(e_x^{i+1}, e_y^{i+1}) &=& \left\{\begin{matrix}
1, & e_x^{i+1} \rightarrow e_y^{i+1};\\ 
0, & otherwise;
\end{matrix}\right.
\end{eqnarray*}

where $e_x^{i+1} \rightarrow e_y^{i+1}$ denotes that element $x$ depends on element $y$.

\begin{figure}[ht!]
\centering
\includegraphics[width=.5\columnwidth]{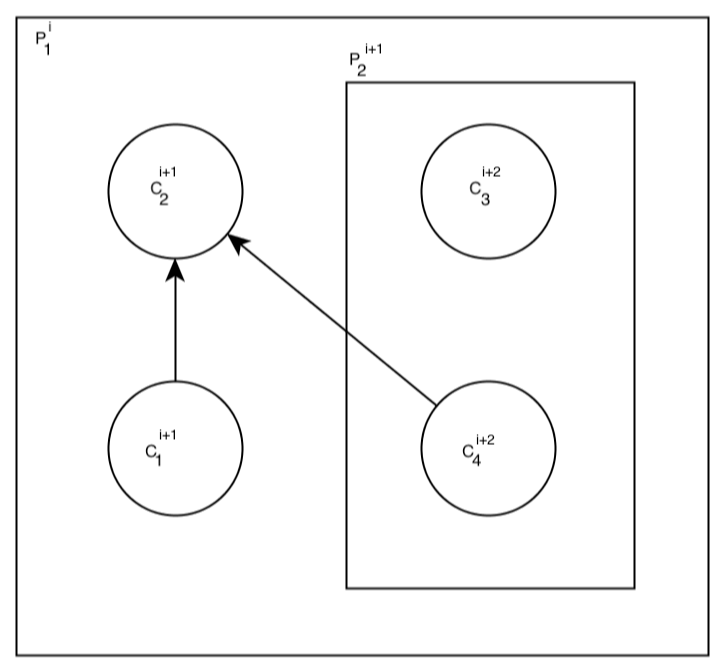}
\caption{An example of different hierarchical levels}
\label{hierarchy_levels}
\end{figure}

\section{Basics of refactoring with community detection}
\paragraph{}
The use of community detection methods for refactoring packages has only recently been studied in the literature by Pan et al \cite{pan2013refactoring}. An overview of their method is as follows.

\begin{enumerate}
\item Gather software information and dependencies from Java classes and jar files. 
\item Construct an undirected weighted dependency network based on the information gathered in the first step. 
\item Apply community detection to the dependency network to find the optimal placement of classes in packages. 
\item Compare the optimized clustering with the original packages structure of the code and suggest a list of possible refactoring candidates.
\end{enumerate}

\paragraph{}
In the first step of their algorithm Pan et al take into account two types of dependencies between code attributes, method accessing attribute dependency and method call dependency. Any of the two mentioned dependencies between two classes implies a dependency between the two classes. 

\paragraph{}
Pan et al model package structure with the help of two different networks, namely the undirected Feature Dependency Network (uFDN) and the undirected Weighted Class Dependency Network (uWCDN). Nodes in uFDN represent features inside the software and edges represent dependencies between features. By this definition, uFDN can be expressed as 

\begin{equation}
\label{eq:ufdn}
uFDN = (V_f, E_f, W_f)
\end{equation}

where $V_f$ and $E_f$ represent the set of vertices and edges in uFDN respectively and $W_f$ is the adjacency matrix for the network. The subscript $f$ shows that the two sets and the adjacency matrix are at the feature level. An example of a uFDN presented in \cite{pan2013refactoring}, consisting of two communities, is shown in Fig. \ref{ufdn}.

\begin{figure}[ht!]
\centering
\includegraphics[width=.7\columnwidth]{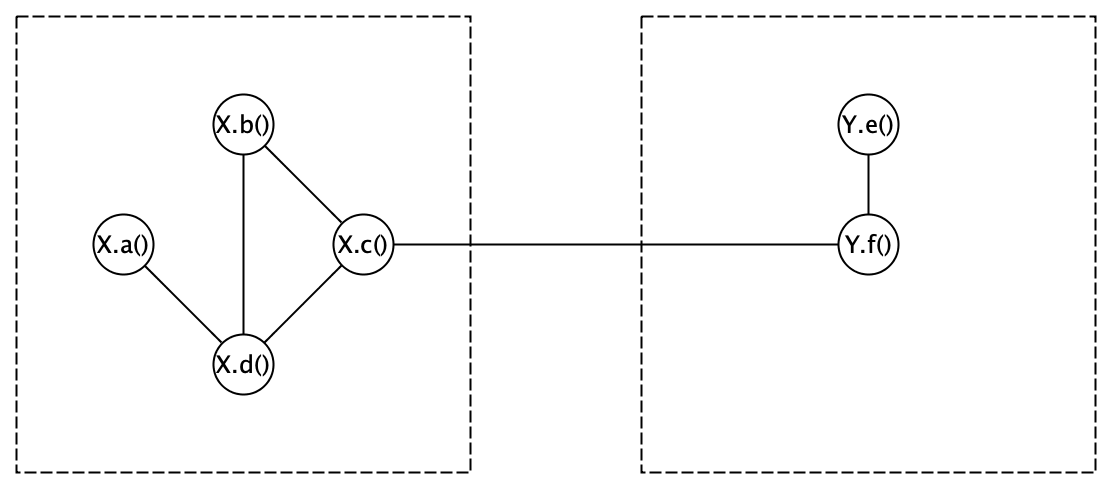}
\caption{A sample uFDN}
\label{ufdn}
\end{figure}

The code resembeling the network in \ref{ufdn} is given below. 

\begin{lstlisting}[language=java]
public class X
{
	private int a;
	public void c() {}
	public void b() {c();}
	public void d() {a++; b(); c();}
}
public class Y
{
	public void f()
	{
		X x = new X();
		x.c();			
	}
	public void e() {f();}
}
\end{lstlisting} 

\paragraph{}
In uWCDN, only the relationship among the classes are shown. A weight is used for every class dependency that represents the number of connections between the the attributes and methods of the two classes involved in the relationship. uWCDN can be defined as

\begin{equation}
\label{eq:uwcdn}
uWCDN = (V_c, E_c, W_c)
\end{equation}

where $V_c$ denotes the set of all vertices at the class level, $E_c$ denotes the set of all edges and $W_c$ is the weighted adjacency matrix of the network. Every entry in $W_c$ can be shown as $w_c(i, j)$ which is the weight between the two elements $i$ and $j$ and is used to denote the strength of a dependency between nodes $i$ and $j$. This weight can be calculated as

\begin{equation}
\label{eq:uwcdn_weights}
w_c(i, j) = |\bigcup_{n_i\in F_i}R_{i1} \cap F_j|
\end{equation}

\paragraph{}
The difference between uFDN and uWCDN is shown in Fig. \ref{ufdn_uwcdn}.

\begin{figure}[ht!]
\centering
\includegraphics[width=0.6\columnwidth]{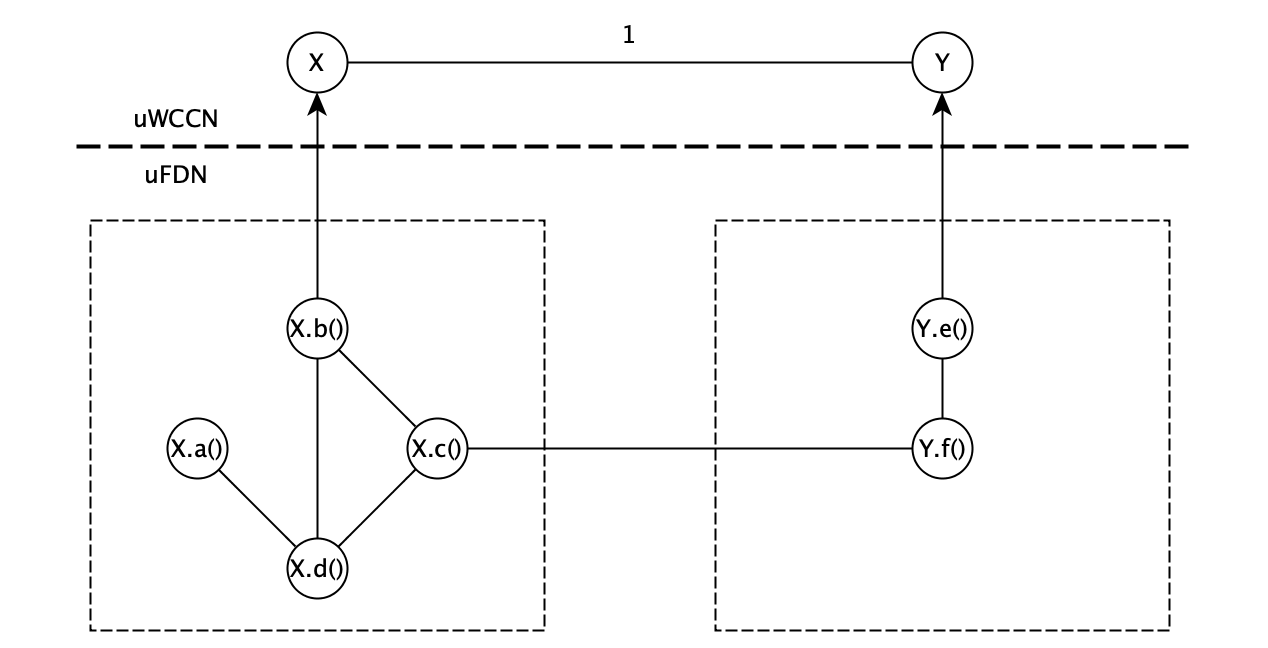}
\caption{A sample uWCDN compared to its respective uFDN}
\label{ufdn_uwcdn}
\end{figure}

where $R_{ik}$ denotes the set of all nodes reachable from $i$ within a distance of $k$ and $Fi$ is the set of all the features of class $i$. It is important to note that $w_c(i, j)$ is equal to $w_c(j, i)$.

\paragraph{}
The community detection algorithm used by Pan et al utilizes an older definition of modularity \cite{newman2004fast}.

\section{The importance of directed graphs in modeling package relationships}
\paragraph{}
Many studies in the literature have utilized undirected community detection methods for various applications. Fortunato \cite{fortunato2010community} presents a comprehensive review on undirected community detection methods. Many studies that include a directed model of a problem simply discard the information that the directions in the graph provides, and use a naive graph transformation approach. In the naive tranformation approach, graph directions are simply discarded and normal undirected community detection methods are applied to the graph. This can cause many important information to be discarded. We briefly discuss three main problems that an undirected approach can cause and how it effects refactoring and package stability. 

\subsection{Citation based cluster models}
\paragraph{}
Using naive transformation approaches for undirected community detection, introduces inaccuracy in certain graphs such as the citation based model that is depicted in Fig. \ref{citation_graph}. In this graph, the two middle vertices can clearly form a meaningful community. The two vertices have in-links from the the same set of vertices while the vertices that they have out-links to are also the same. In the package sense, the middle community resembles a package that is more stable than the package containing the vertices from the left. Many utility packages and libraries contain packages with a similar structure. There is little or no connection between the vertices inside the package, yet they belong to the same community as they are used in similar situations. 

\begin{figure}[ht!]
\centering
\includegraphics[width=0.2\columnwidth]{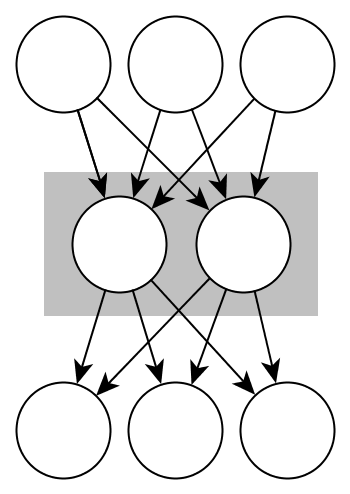}
\caption{Citation based cluster}
\label{citation_graph}
\end{figure}

\paragraph{}
After applying naive transformation and trying to find optimal communities in the graph in Fig. \ref{citation_graph}, the output simply looses the intended community structure. The output is given in Fig. \ref{citation_modular}. Black vertices have been put into one community by the algorithm and white vertices have been placed in another community. In this clustering, it is clear that SDP (Stable Dependencies Principle) is violated and both communities depend on each other. Using a community detection algorithm intended for undirected graphs has changed the SDP compliant structure that the programmer had intended. 

\begin{figure}[ht!]
\centering
\includegraphics[scale=0.2]{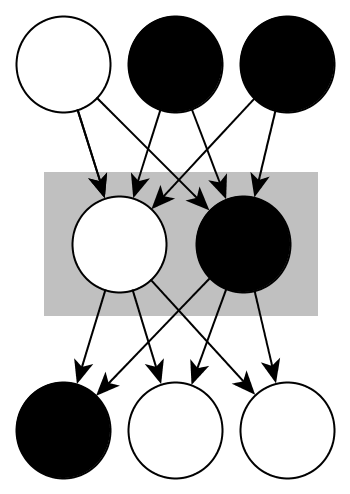}
\caption{Citation based cluster after naive transformation}
\label{citation_modular}
\end{figure}

\subsection{Bidirected graphs and loss of information}
\paragraph{}
As discussed in \cite{malliaros2013clustering}, the information needed for correct community detection is simply lost in certain graphs such as the bidirected graph shown in Fig. \ref{bidirected_graph}.

\begin{figure}[ht!]
\centering
\includegraphics[scale=0.6]{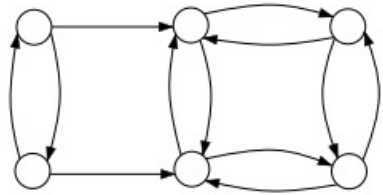}
\caption{An example of a bidirected graph with two communities}
\label{bidirected_graph}
\end{figure}

\begin{figure}[ht!]
\centering
\includegraphics[scale=0.6]{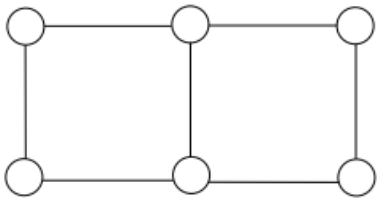}
\caption{An example of a bidirected graph after naive transformation}
\label{bidirected_no_direction}
\end{figure}

From a stability perspective, the dependency graph in Fig. \ref{bidirected_graph} shows a two packages that fully conform to SDP. The community created by the four vertices on the right represent a very stable package that the left community is depending upon. By performing naive transformation the graph would look like Fig. \ref{bidirected_no_direction}. This graph has lost its community structure and the two left most vertices and the two right most vertices will be treated in the same way when it is given to a community detection method. Fig. \ref{bidirected_no_direction_modular} shows this graph after applying community detection while optimizing Newman's modularity. 

\begin{figure}[ht!]
\centering
\includegraphics[scale=0.6]{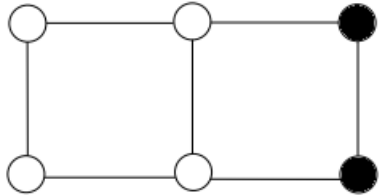}
\caption{A clustered version of the graph in Fig \ref{bidirected_no_direction}}
\label{bidirected_no_direction_modular}
\end{figure}

\section{Stability and modularity}
In this section, the relationship between the directed version of modularity and the Stability Dependencies Principle (SDP) in refactoring packages is discussed. In a scenrio where a class is chosen to be moved from one package to another using community detection methods, we show that modularity is in favor of SDP and hiding dependencies that violate SDP inside packages has a higher contribution to modularity than hiding non-violating dependencies. To show this behavior, some prior definitions are needed. 

\begin{defi}
A movement of class $i$ from package $p_1$ to package $p_2$ is shown as the tuple $(i, p_1, p_2)$.
\end{defi}

\begin{defi}
A border node in a package is defined as a node that has connections with nodes in other packages and thus directly effects the package's instability metric.
\end{defi}

\paragraph{}
SDP is generally satisfied in a case where no stable package depends on an unstable package. When considering the movement of only two border classes, while all other classes and packages are left intact, then the only dependencies effecting the two package's instability metric are the dependencies of the two border nodes. If a border node $i$ from stable package $p_1$ depends on a node $j$ from unstable package $p_2$, then clearly SDP is violated. 

\begin{remark}
\label{sdp_satisfaction}
Let $k_i^{out}$ and $k_j^{out}$ be the out-link degree of vertices $i$ and $j$ respectively, and $k_i^{in}$ and $k_j^{in}$ be the in-link degree of vertices $i$ and $j$. 
If $k_i^{out} > k_i^{in}$ and $k_j^{out} < k_j^{in}$ and node $i$ and node $j$ are border nodes, then SDP is satisfied.
\end{remark}

\begin{remark}
\label{sdp_dissatisfaction}
Let $k_i^{out}$ and $k_j^{out}$ be the out-link degree of vertices $i$ and $j$ respectively, and $k_i^{in}$ and $k_j^{in}$ be the in-link degree of vertices $i$ and $j$. 
If $k_i^{out} < k_i^{in}$ and $k_j^{out} > k_j^{in}$ and node $i$ and node $j$ are border nodes, then SDP is not satisfied.
\end{remark}

\begin{prop}
\label{prop:sdp_mod}
Let $i$ and $j$ be two classes in dependency graph $G$. If a movement $(i, c_i, c_j)$ exists and the conditions of remark \ref{sdp_satisfaction} holds, then the increase in modularity $Q$ is more, compared to the situation in which the conditions of remark \ref{sdp_dissatisfaction} holds true. 
\end{prop}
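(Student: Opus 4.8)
The plan is to evaluate the directed modularity of Eq.~\ref{eq:Modularity_directed} before and after the movement $(i, c_i, c_j)$ in each of the two degree regimes and to compare the resulting gains. Because a movement changes only the community label of the single vertex $i$, I would begin by writing the modularity gain as a sum over the neighbours of $i$ whose co-membership status with $i$ is altered,
\begin{equation*}
\Delta Q = \frac{1}{m}\sum_{\ell}\left[\left(A_{i\ell}-\frac{k_i^{out}k_\ell^{in}}{m}\right)+\left(A_{\ell i}-\frac{k_\ell^{out}k_i^{in}}{m}\right)\right]\bigl(\delta(c_j,C_\ell)-\delta(c_i,C_\ell)\bigr),
\end{equation*}
so that only terms linking $i$ to vertices that change side relative to $i$ survive.

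Next I would use the standing assumption of this section — that every class and package other than the two border nodes $i$ and $j$ is held fixed — to discard all summands except the one indexed by $\ell=j$. This isolates the contribution of the single mutual $i$--$j$ dependency, i.e.\ the two ordered slots $(i,j)$ and $(j,i)$ passing from inter-community to intra-community. Substituting the two remark hypotheses, $k_i^{out}>k_i^{in}$ with $k_j^{out}<k_j^{in}$ for Remark~\ref{sdp_satisfaction} and the reversed inequalities for Remark~\ref{sdp_dissatisfaction}, then turns the comparison into an inequality between the null-model penalty products $k_i^{out}k_j^{in}$ and $k_j^{out}k_i^{in}$ evaluated in the two regimes.

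The step I expect to be the main obstacle is breaking the apparent symmetry between the two regimes. If the pair $\{i,j\}$ is treated symmetrically and both penalties are retained on an equal footing, the combined penalty $k_i^{out}k_j^{in}+k_j^{out}k_i^{in}$ is invariant under the exchange of the roles of $i$ and $j$ that separates the two remarks, and the two gains coincide; a naive calculation therefore yields only equality. The strict inequality must come from the actual orientation of the dependency: exactly one directed edge is present, so setting $A_{ij}=1$ and $A_{ji}=0$ leaves
\begin{equation*}
\Delta Q = \frac{1}{m}\left(1-\frac{k_i^{out}k_j^{in}}{m}\right)-\frac{1}{m}\cdot\frac{k_j^{out}k_i^{in}}{m},
\end{equation*}
and it is the penalty of the \emph{present} edge that breaks the tie.

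Finally I would simplify this expression under each degree regime and read off the ordering, the conceptual driver being the Arenas et al.\ principle behind Eq.~\ref{eq:Modularity_directed}: a dependency that runs with the null model's preferred orientation (high out-degree source to high in-degree sink) is the less surprising one and is penalised accordingly, whereas a dependency running against it is the more informative. The only remaining care is strictness, which follows from the strict degree inequalities assumed in the two remarks together with the border-node hypothesis guaranteeing that $k_i^{in},k_i^{out},k_j^{in},k_j^{out}$ are all positive; one must also check that the direction of the final inequality is matched to the intended reading of SDP, since it is governed entirely by the orientation of the present edge and by which vertex carries the larger out-degree.
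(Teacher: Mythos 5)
Your setup is sound---the $\Delta Q$ decomposition over pairs whose co-membership with $i$ changes is correct, and the reduction to the $\ell=j$ term matches the paper's standing assumption that only the two border nodes are in play---but your final step fails, and it is your own symmetry observation that kills it. In your expression $\Delta Q = \frac{1}{m}\bigl[1 - \frac{k_i^{out}k_j^{in}}{m} - \frac{k_j^{out}k_i^{in}}{m}\bigr]$, the orientation of the present edge enters only through the constant $1$ (i.e.\ through $A_{ij}+A_{ji}$), which is identical in both regimes, since the edge points from $i$ to $j$ under both Remark \ref{sdp_satisfaction} and Remark \ref{sdp_dissatisfaction}; only the degree profiles differ between the two remarks. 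Meanwhile the total penalty $k_i^{out}k_j^{in}+k_j^{out}k_i^{in}$ is, exactly as you note, invariant under the profile exchange that turns one remark's hypotheses into the other's. So with both ordered slots retained, $\Delta Q$ is literally the same number in the two regimes: nothing ``breaks the tie,'' and no strict inequality can follow from your formula. The hedge in your closing sentence (that the direction of the inequality still needs to be checked) is a symptom of this: there is no direction to check, because the difference is zero.

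The paper obtains a strict inequality precisely by not doing what you did: its proof compares only the single ordered summand carrying the actual edge. It sets $\bar{k}_i^{out} = k_j^{out}$ and $\bar{k}_j^{in} = k_i^{in}$, deduces $\bar{k}_i^{out}\bar{k}_j^{in} = k_j^{out}k_i^{in} < k_i^{out}k_j^{in}$ from the degree hypotheses, and concludes $A_{ij} - \bar{k}_i^{out}\bar{k}_j^{in}/m > A_{ij} - k_i^{out}k_j^{in}/m$; the reverse-slot penalty $k_j^{out}k_i^{in}/m$ that you correctly included never appears in its comparison, and the same edge-only accounting is used in the worked example after the proposition (only present edges are assigned contributions). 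So to reach the stated conclusion you must either adopt that accounting---define the ``contribution of hiding the dependency'' as the single term $A_{ij}-k_i^{out}k_j^{in}/m$ and discard the penalty of the absent reverse slot---or accept that under the full formula of Eq.~\ref{eq:Modularity_directed} the two regimes tie, which is a genuine (and arguably damning) observation about the proposition itself. Your caution about matching the direction to the intended reading of SDP is warranted for a separate reason: the paper's proof concludes $\bar{Q} > Q$, i.e.\ the SDP-violating regime gains more, which agrees with its prose and its numerical example but is the opposite of the proposition's literal wording.
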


\begin{proof}
Let $Q$ denote modularity while the conditions in remark \ref{sdp_satisfaction} holds true and $\bar{Q}$ denote modularity while the conditions in remark \ref{sdp_dissatisfaction} holds true. $Q$ and $\bar{Q}$ can be calculated using Eq. \ref{eq:Modularity_directed} as

\begin{eqnarray*}
Q&=&\frac{1}{m}\sum_{ij}[A_{ij} - \frac{k_i^{out}k_j^{in}}{m}] \delta(C_i, C_j), \\
\bar{Q}&=&\frac{1}{m}\sum_{ij}[A_{ij} - \frac{\bar{k}_i^{out}\bar{k}_j^{in}}{m}] \delta(C_i, C_j).
\end{eqnarray*}

The bar on in-link or out-link $k$ denotes that it is being calculated in the scenario of remark \ref{sdp_dissatisfaction}, and is therefore equivelant to the out-link and in-link in the scenario of remark \ref{sdp_satisfaction} respectively. Thus one can write

\begin{eqnarray*}
\label{eq:kij_bar_relation}
\bar{k}_i^{out} &=& k_j^{out} \\
\bar{k}_j^{in} &=& k_i^{in}.
\end{eqnarray*}

By looking at the conditions of remark \ref{sdp_satisfaction} and remark \ref{sdp_dissatisfaction} it is clear that

\begin{eqnarray*}
\label{eq:kij_bar_relation2}
\bar{k}_i^{out}\bar{k}_j^{in} &<& k_i^{out}k_j^{in} \\
\frac{\bar{k}_i^{out}\bar{k}_j^{in}}{m} &<& \frac{k_i^{out}k_j^{in}}{m} \\
A_{ij} - \frac{\bar{k}_i^{out}\bar{k}_j^{in}}{m} &>& A_{ij} - \frac{k_i^{out}k_j^{in}}{m} \\
\bar{Q} &>& Q.
\end{eqnarray*}

\end{proof}

\paragraph{}
The above proposition shows how modularity is compatible with the notion of SDP. Modularity is in favor of non-random structure in a network. Violating SDP would mean that a stable package is depending on an unstable package. In this scenario, the above proof shows that keeping two nodes that have violated SDP before, inside a single package is better for $Q$ than keeping two nodes that did not violate SDP. It is also important to note that if $i$ and $j$ belong to two different packages, then the condition will have no contribution to modularity and therefore is not discussed. 

\paragraph{}
As an example for the proved proposition, suppose that a system contains two packages $C_1$ and $C_2$, where $C_1$ is unstable and $C_2$ is a stable package. Two slighly different versions of this system is depicted in Fig. \ref{example1}. In both of these versions, vertices 1, 2, 3 and 4 are members of $C_2$ and vertices 5, 6, 7 and 8 belong to $C_1$. It is clear that in condition (b), edge $(1,5)$ is violating SDP. Based on Proposition \ref{prop:sdp_mod}, we show that moving node 1 from $C_2$ to $C_1$ has more positive contribution for package modularity, than in the case of condition (a). If movement $(1, C_2, C_1)$ happens, then four new edges positively contribute to the overall modularity of the dependency graph while one edge's contribution is eliminated. The reason for this is that edges between two communities provide no contribution to modularity because the kronecker delta function in Eq. \ref{eq:Modularity_directed} becomes zero. therefore edges $(5, 1)$, $(6, 1)$, $(7, 1)$ and $(8, 1)$ will have new contributions to modularity and edge $(1, 3)$ will no longer have any contribution. The changes in modularity $Q$ for condition (b) can be calculated using Eq. \ref{eq:Modularity_directed} as

\begin{eqnarray*}
\label{eq:example1}
\Delta Q &=& \overbrace{4(1 - \frac{1 \times 1}{2m})}^{\mathclap{\text{Contribution of the 4 new edges}}} - \underbrace{(1 - \frac{1 \times 1}{2m})}_{\mathclap{\text{Contribution of edge }(1, 3)}} = 3(1 - \frac{1 \times 1}{2m}).
\end{eqnarray*}

\paragraph{}
By replacing $m$ with the number of edges, we have

\begin{eqnarray*}
\label{eq:example1b}
\Delta Q &=& \frac{57}{20} = 2.85.
\end{eqnarray*}

\paragraph{}
Changes in modularity for condition (a) can be calculated the same way as follows.

\begin{eqnarray*}
\label{eq:example1c}
\Delta Q &=& \overbrace{(1 - \frac{4 \times 4}{2m})}^{\mathclap{\text{Contribution of edge }(5, 1)}} + 3(1-\frac{1 \times 4}{2m}) - (1 - \frac{1 \times 1}{2m}) = \frac{33}{20} = 1.65.
\end{eqnarray*}

\paragraph{}
The results clearly indicate that the graph gained more modularity when trying to suppress an SDP violation than when it is not. 

\begin{figure}[ht!]
\centering
\includegraphics[scale=0.5]{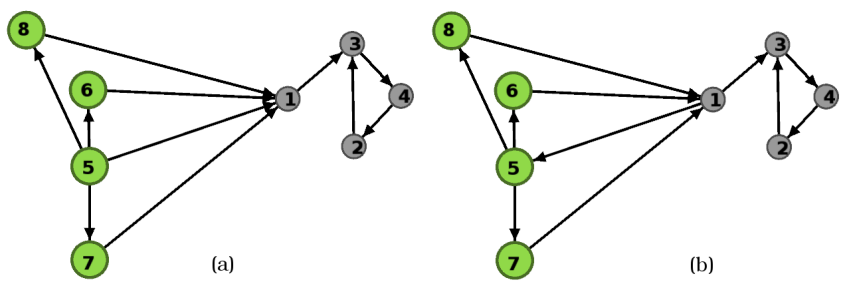}
\caption{Two different graph dependency conditions.}
\label{example1}
\end{figure}

\section{Proposed refactoring method}
\paragraph{}
By considering the discussed importance of directed graphs in refactoring software packages and the package coupling metric proposed by \cite{gupta2009package}, we present a package refactoring algorithm. 

\paragraph{}
For calculating the dependencies, we use the package coupling metric provided by Gupta et al \cite{gupta2009package} at hierarchy level $i+1$. This is a crucial point that must be noted. Hierarchy level $i+1$ is being used because it gives access to elements inside packages at level $i$. The classes and sub-pakages in this level of hierarchy are the ones that will be analyzed for possible refactorings. In this study, only one package level is analyzed for refactoring, as deeper levels cause many open problems that need to be tackled. The most basic problem with optimizing software metrics such as coupling and cohesion in many levels of abstractness simultaneously is that cohesion inside one level can be considered as coupling in a deeper level, thus the problem of minimizing coupling contradicts with the problem of maximizing cohesion in a higher level of abstractness, i.e., the package level $i$. therefore, in this work, only packages at level $i$ and their respective elements at level $i+1$ are considered. 

\paragraph{}
For calculating the package dependency graph's modularity, we use the directed and weighted version of modularity \cite{arenas2007size} expressed as

\begin{eqnarray*}
\label{eq:directed_weighted_mod}
Q&=&\frac{1}{2w}\sum_{ij}[w_{ij} - \frac{w_i^{out}w_j^{in}}{2w}] \delta(C_i, C_j). \\
\end{eqnarray*}

where $w_i^{out}$ and $w_j^{in}$ are respectively the output and input weights of nodes $i$ and $j$ and 

\begin{eqnarray*}
\label{eq:wiout}
w_i^{out} &=& \sum_{j}w_{ij} \\
w_j^{in} &=& \sum_{i}w_{ij} \\
2w &=& \sum_{i} w_i^{out} = \sum_{j} w_j^{in} = \sum_{i} \sum_{j} w_{ij}.
\end{eqnarray*}

The weights for an edge is equal to the edge's coupling metric given in Eq. \ref{eq:r_gupta}. These weights are used in the package dependency network, similar to the weights in uWCDN (Eq. \ref{eq:uwcdn}) provided by Pan et al \cite{pan2013refactoring}. Considering the directedness of the network we can define an enhanced version of uWCDN, namely DWPDN (Directed, Weighted Package Dependency Network) that can be expressed as

\begin{equation}
\label{eq:dwpdn}
DWPDN = (V_{i+1}, E_{i+1}, W_{i+1})
\end{equation}

where $V_{i+1}$ denotes the set of all vertices at hierarchy level $i+1$, $E_{i+1}$ denotes the set of all edges at hierarchy level $i+1$ and $W_{i+1}$ is the assymetric and weighted adjacency matrix of the network at hierarchy level $i+1$. Every element of $W_{i+1}$ can calculated as 

\begin{equation}
\label{eq:dwpdn_weights}
w_{i+1}(j, k) = Coup(p_j^{i+1}, p_k^{i+1})
\end{equation}

where $j$ and $k$ are two elements and $Coup$ is the coupling function from Eq. \ref{eq:coup}. 

\paragraph{}
The main phases of the proposed package refactoring algorithm are presented in Alg. \ref{alg:proposed_refactoring}. 

\begin{algorithm}[htbp]
\caption{Proposed refactoring algorithm}\label{alg:proposed_refactoring}
\textbf{Input: }A DWPDN\\
\textbf{Output: }A list of package movement suggestions and the optimal $Q$ that can be gained
\begin{algorithmic}[1]	
\Procedure{}{}
\State $suggestedMovements \gets \varnothing$
\State $Q' \gets -1$
\State $Q \gets \text{modularity based on Eq. \ref{eq:directed_weighted_mod}}$
\State $selectedCommunity \gets 0$
\For {every node $i$}
	\State $C_i \gets $ node $i$'s community
	\For {every node $j$}
		\State $C_j \gets $ node $j$'s community
		\State $tempQ \gets \text{modularity, while considering node } i \text{ in }C_j$
		\If {$tempQ > Q'$}
			\State $Q' \gets tempQ$
			\State $selectedCommunity \gets C_j$
		\EndIf
	\EndFor
	\If {$Q' > Q$}
		\State Add movement $(i, C_i, selectedCommunity)$ to suggestedMovements
		\State Move node $i$ to $selectedCommunity$
		\State $Q \gets Q'$
		\State $i \gets 1$
	\EndIf
\EndFor
\State \textbf{return} $Q'$ and $suggestedMovements$
\EndProcedure
\end{algorithmic}
\end{algorithm}

\chapter{Evaluation}
\pagebreak

\paragraph{}
This chapter evaluates the proposed algorithm with two case studies using our implemented package refactoring tool. The two subjects which the algorithm was applied on are the same open source subjects used in \cite{pan2013refactoring}. In the remaining sections of this chapter, the two subjects are briefly introduced and analyzed by the implemented tool. The proposed refactoring algorithm is applied on the two subjects and the results are evaluated. For simplicity, the first version of the implemented tool does not apply weights and considers all weights between classes to be one. 

\section{Subjects}
\paragraph{}
The two subjects being analyzed in this chapter are the same as the subjects in \cite{pan2013refactoring}, namely Trama\footnote{\url{http://trama.sourceforge.net}} and FrontEndForMySQL\footnote{\url{http://frontend4mysql.sourceforge.net}}. 

\paragraph{}
Trama is a graphical tool for manipulating and working with matrices. FrontEndForMySQL is a graphical front end for the MySQL database system and provides an easier and more user friendly environment for working with MySQL queries. Some details of the two subjects are shown in Table \ref{subjects}.

\begin{table}[h!]
	\caption{Details of the systems analyzed}
	\label{subjects}
	\centering
    \begin{tabular}{|l|l|l|l|l|}
    \hline
    \textbf{System} & Version  & Number of packages & Number of classes \\ \hline
    Trama        & 1.0 & 6 & 58  \\ \hline
    FrontEndForMySQL        & 1.0 & 10 & 56  \\ \hline
    \end{tabular}
\end{table}

\paragraph{}
The original packaging structure for Trama is depicted in Fig. \ref{original_trama}. The original modularity calculated for the default packaging of Trama is calculated as 0.28 and the list of its packages is as follows.

\begin{itemize}
\item visao
\item visao.renderizador
\item persistencia
\item negocio
\item negocio.leitor.Interface
\item negocio.leitor
\end{itemize}

\begin{figure}[ht!]
\centering
\includegraphics[scale=0.6]{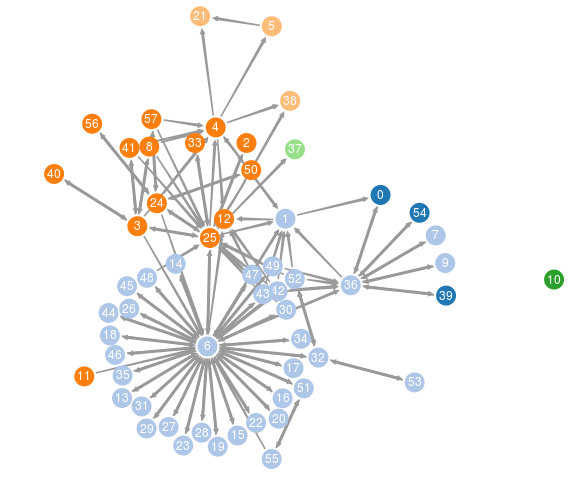}
\caption{Original packaging structure of Trama}
\label{original_trama}
\end{figure}

\paragraph{}
FrontEndForMySQL is a larger system compared to Trama, with an initial package modularity of 0.21. The system's default packaging structure is depicted in Fig. \ref{front_original} and it contains the following packages.
\begin{itemize}
\item frontendformysql
\item frontendformysql.domain.BackEnd
\item frontendformysql.domain.BackEndData
\item frontendformysql.domain.BackEndComponent.Editor
\item frontendformysql.domain.BackEndInterfaces
\item frontendformysql.domain.BackEnd.System
\item frontendformysql.domain.BackEndComponent.DriverModule
\item frontendformysql.domain.BackEndData
\item frontendformysql.domain.BackEndComponent.XMLutil
\item frontendformysql.domain.BackEndComponent.IO
\item frontendformysql.domain.BackEndComponent.DataStructures
\item frontendformysql.domain.BackEndComponent.Editor
\item frontendformysql.domain.BackEndInterfaces
\item frontendformysql.domain.BackEnd.System
\item frontendformysql.domain.BackEndComponent.DriverModule
\item frontendformysql.domain.BackEndComponent.XMLutil
\item frontendformysql.domain.BackEndComponent.IO
\item frontendformysql.domain.BackEndComponent.DataStructure
\end{itemize}

\begin{figure}[ht!]
\centering
\includegraphics[scale=0.6]{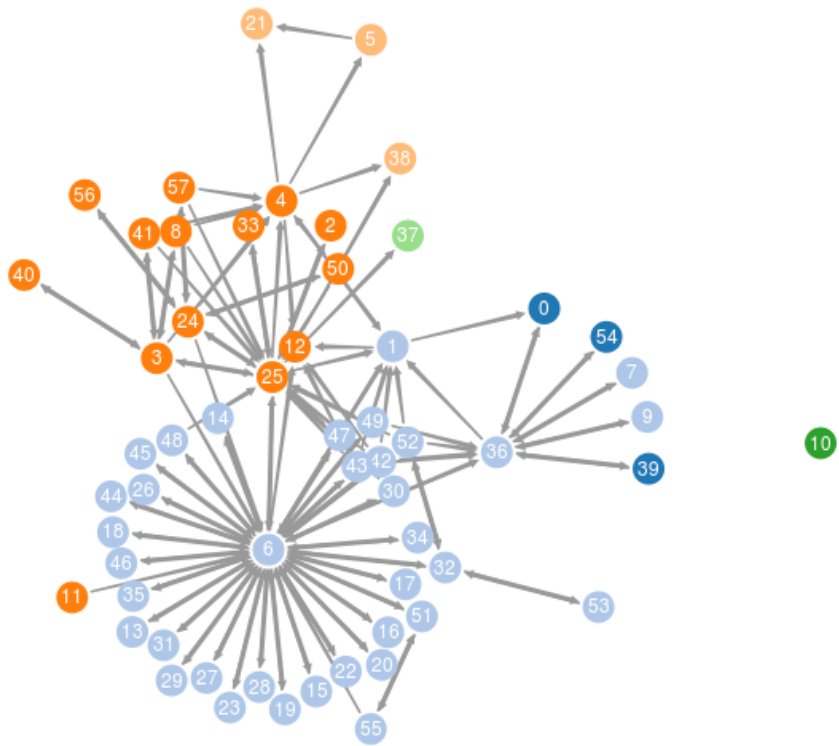}
\caption{Original packaging structure of FrontEndForMySQL}
\label{front_original}
\end{figure}

\section{Case studies and results}
\paragraph{}
After applying the proposed refactoring algorithm, with considering the importance of edge directions, the clustering of Trama changes to the depicted structure in Fig. \ref{refactor_trama} and the suggested movements are given in Table \ref{trama_suggestions}. The new packaging of Trama has a directed modularity of 0.43 and shows an improvement over the original 0.28. It is important to note that not all movements are acceptable and the suggestions should be given to a programmer for final analysis. 

\begin{table}[h!]
	\caption{Suggested movements for Trama classes}
	\label{trama_suggestions}
	\centering
    \begin{tabular}{|l|l|l|l|l|}
    \hline
    \textbf{Order} & \textbf{Class name} & \textbf{Old package}  & \textbf{Suggested package} \\ \hline
	1	&	Main		&	negocio	&	visao\\ \hline
	2	&	Matriz		&	negocio	&	persistencia\\ \hline
	3	&	ModeloTabela		&	visao	&	persistencia\\ \hline
	4	&	JTableCustomizado	&	visao	&	visao.renderizador\\ \hline
	5	&	JTableCustomizado\$1	&	visao	&	visao.renderizador\\ \hline	
	6	&	JTableCustomizado\$2	&	visao	&	visao.renderizador\\ \hline		
	7	&	LeitorDeModelo	&	negocio.leitor	&	negocio\\ \hline
	8	&	Tela\$23		&	visao	&	persistencia\\ \hline
	9	&	Tela\$22		&	visao	&	persistencia\\ \hline
	10	&	Tela\$24		&	visao	&	visao.renderizador\\ \hline
	11	&	Tela\$3\$1	&	visao	&	visao.renderizador\\ \hline
    \end{tabular}
\end{table}

\begin{figure}[ht!]
\centering
\includegraphics[scale=0.6]{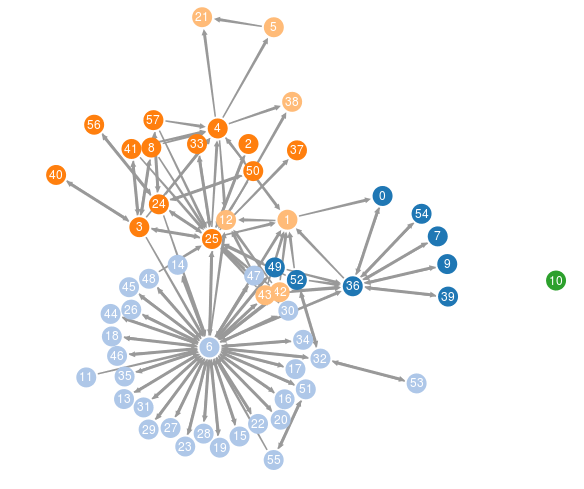}
\caption{New packaging of the Trama system after refactoring}
\label{refactor_trama}
\end{figure}

\paragraph{}
As a comparison, an undirected version of the algorithm, using naive transformation, was applied on the Trama system. The produced clustering is shown in Fig. \ref{trama_undirected}. In this clustering, modularity gets a value of 0.41. It is important to note that comparing the modularity of the two approaches would not be correct, as the formula for the two quality measures are inherently different. However, a comparison on package instability is shown in Table \ref{trama_instability_comparison}, in which $OI$ is the original instability of a package, $DI$ is the instability of a package after the proposed refactoring algorithm with edge directions, is applied and $UI$ is the instability of a package after applying the undirected version of the algorithm. 

\begin{figure}[ht!]
\centering
\includegraphics[scale=0.6]{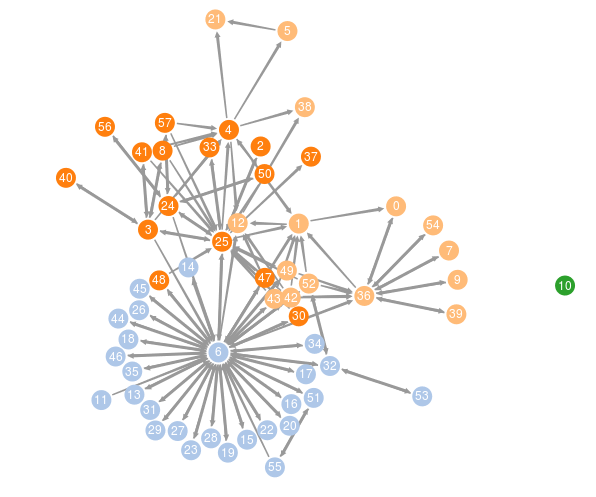}
\caption{New packaging of the Trama system after refactoring with naive transformation}
\label{trama_undirected}
\end{figure}

\begin{table}[h!]
	\caption{Comparison of Trama's instability metric for different approaches}
	\label{trama_instability_comparison}
	\centering
    \begin{tabular}{|l|l|l|l|l|}
    \hline
    \textbf{Package name} & \textbf{OI} & \textbf{DI}  & \textbf{UI} \\ \hline
	negocio						&	0.478 	&	0.529	&	0.6		\\ \hline
	persistencia				&	0		&	0.368	&	0.409	\\ \hline
	visao.renderizador			&	0.428	&	0.538	&	0		\\ \hline
	negocio.leitor				&	0		&	0		&	0		\\ \hline
	visao						&	0.64	&	0.578	&	0.5		\\ \hline	
	negacio.leitor.Intergface	&	0		&	0		&	0		\\ \hline		
    \end{tabular}
\end{table}

\paragraph{}
Table \ref{trama_instability_comparison} shows how two packages became more stable after applying the proposed, directed clustering algorithm,  while the stability of package \textit{visao} decreased by 0.078. From Fig. \ref{trama_undirected}, it is also clear that the \textit{visao.renderizador} is merged with other packages and thus is not taken into account for comparison.

\paragraph{}
The implementation of the proposed algorithms was also applied to the FrontEndForMySQL system. The original package structure for FrontEndForMySQL and its structure after refactoring are depicted in Fig. \ref{original_frontendformysql} and Fig. \ref{refactor_frontendformysql} respectively. The original modularity for FrontEndForMySQL is calculated as 0.21. 

\begin{figure}[ht!]
\centering
\includegraphics[scale=0.5]{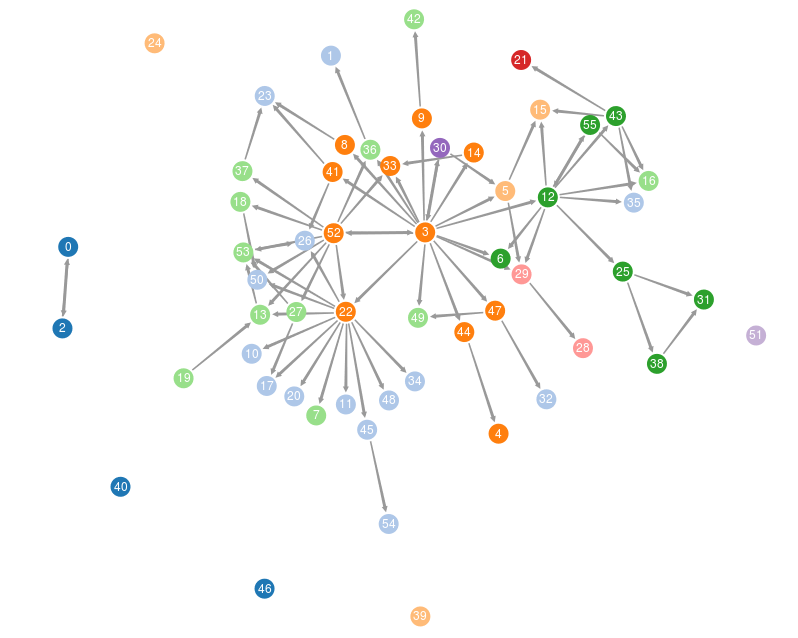}
\caption{Original packaging of the FrontEndForMySQL system}
\label{original_frontendformysql}
\end{figure}

\begin{figure}[ht!]
\centering
\includegraphics[scale=0.5]{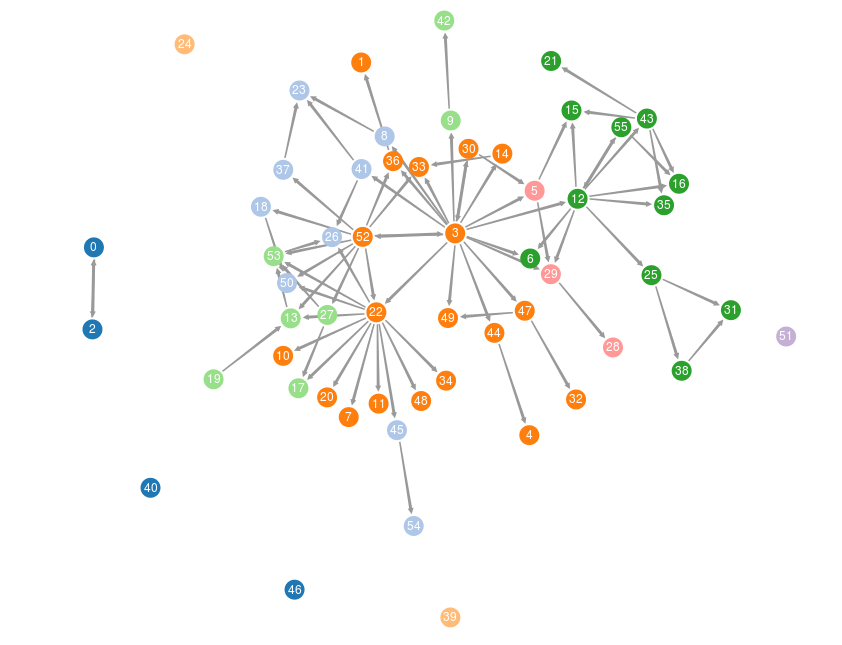}
\caption{New packaging of the FrontEndForMySQL system after refactoring}
\label{refactor_frontendformysql}
\end{figure}

\paragraph{}
Similar to the previous case study, an undirected version of the algorithm, using a naive transformation for removing edge directions was applied to FrontEndForMySQL and the clustering result is depicted in Fig. \ref{front_undirected}. The comparison table for this package instability measures is given in Table \ref{front_instability_comparison}.

\begin{figure}[ht!]
\centering
\includegraphics[scale=0.5]{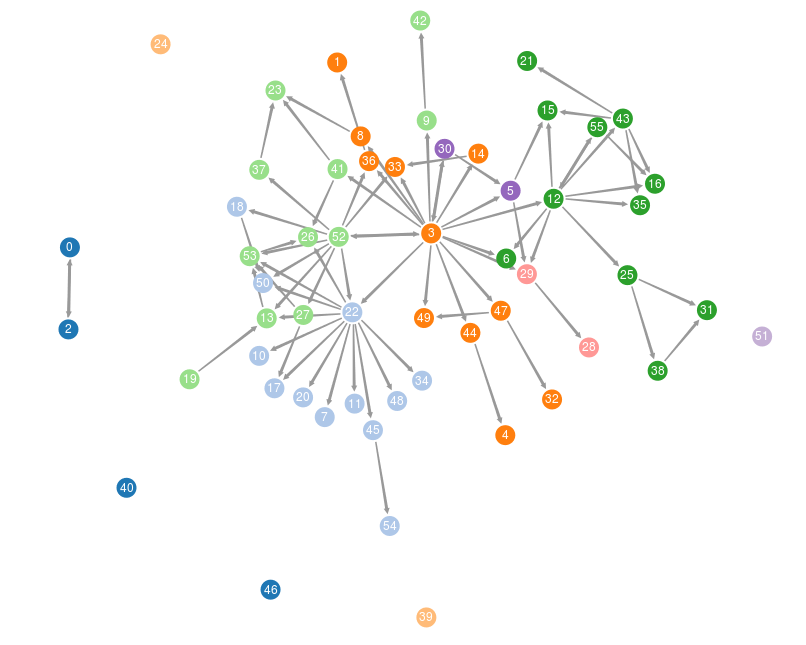}
\caption{New packaging of the FrontEndForMySQL system after refactoring with naive transformation}
\label{front_undirected}
\end{figure}

\begin{table}[h!]
	\caption{Comparison of FrontEndForMySQL's instability metric for different approaches}
	\label{front_instability_comparison}
	\centering
    \begin{tabular}{|l|l|l|l|l|}
    \hline
    \textbf{Package name} & \textbf{OI} & \textbf{DI}  & \textbf{UI} \\ \hline
	BackEndInterfaces						&	0 		&	0		&	0.375		\\ \hline
	BackEnd									&	0.969	&	1		&	0.714		\\ \hline
	BackEnd.System							&	0.2		&	0		&	0		\\ \hline		
	BackEndComponent.IO						&	0		&	0.2		&	0		\\ \hline
	BackEndComponent.XMLutil				&	0		&	0		&	0		\\ \hline	
	BackEndComponent.Editor					&	0		&	0		&	0		\\ \hline	
	BackEndComponent.DriverModule			&	0.818	&	0.25	&	0.25		\\ \hline		
	BackEndComponent.DataStructures		&	0		&	0		&	0	\\ \hline
	frontendformysql						&	0.666	&	0		&	0.6		\\ \hline		
	BackEndData								&	0.238	&	0.125	&	0.5		\\ \hline		
    \end{tabular}
\end{table}

\paragraph{}
Table \ref{front_instability_comparison} clearly shows that the overall instability of packages is higher when edge directions are not taken into account in the refactoring algorithm. 

\chapter{Live analysis of graph clusters}
\pagebreak
\paragraph{}
Considering the vast number of graph clustering applications in software engineering, a need for a tool that can import different graph modeled structures, perform graph clustering algorithms and provide a rich client for tweaking the properties of the model, is clearly felt. This need has motivated us to create a tool, namely Picasso, with such capabilities. Fig. \ref{picasso_overview} shows an overall view of this tool while analyzing a software package. Colors are used to show the different communities inside the graph. The server-side and client-side codes of this tool are given in Appendices A and B respectively.

\section{Picasso overview}
\begin{figure}[ht!]
\centering
\includegraphics[scale=0.35]{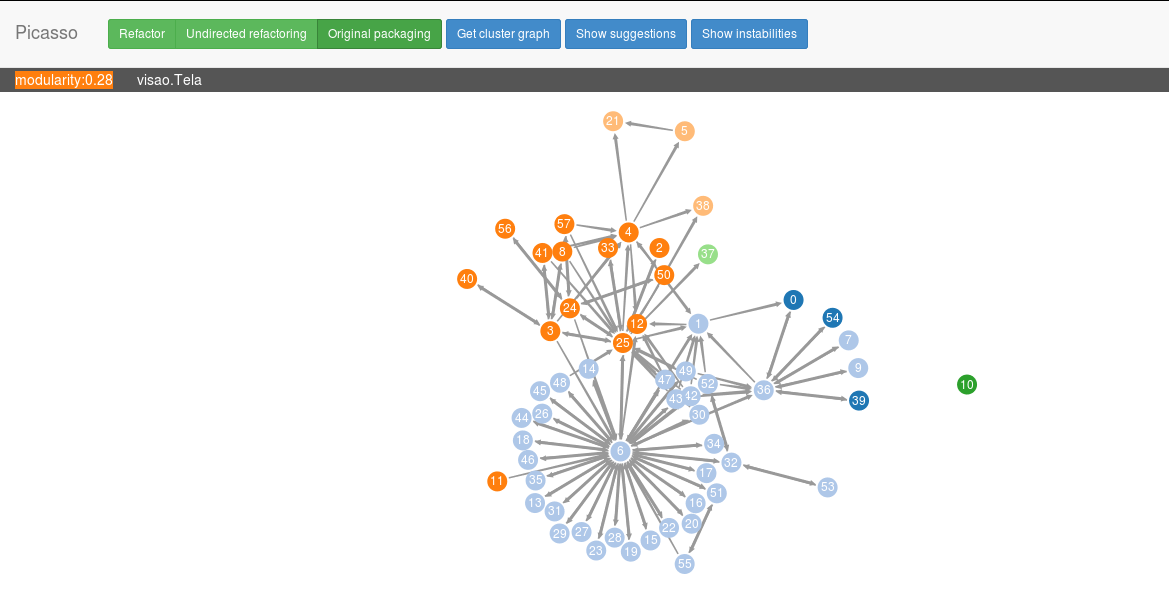}
\caption{Picasso: A tool for live package dependendy analysis}
\label{picasso_overview}
\end{figure}

\paragraph{}
Picasso applies the proposed refactoring algorithm on software packages and provides a list of class moving suggestions. An example of the suggestions that Picasso presents is depicted in Fig. \ref{picasso_suggestions}. Every suggestion is a class movement from a source package to a target package. 

\begin{figure}[ht!]
\centering
\includegraphics[scale=0.5]{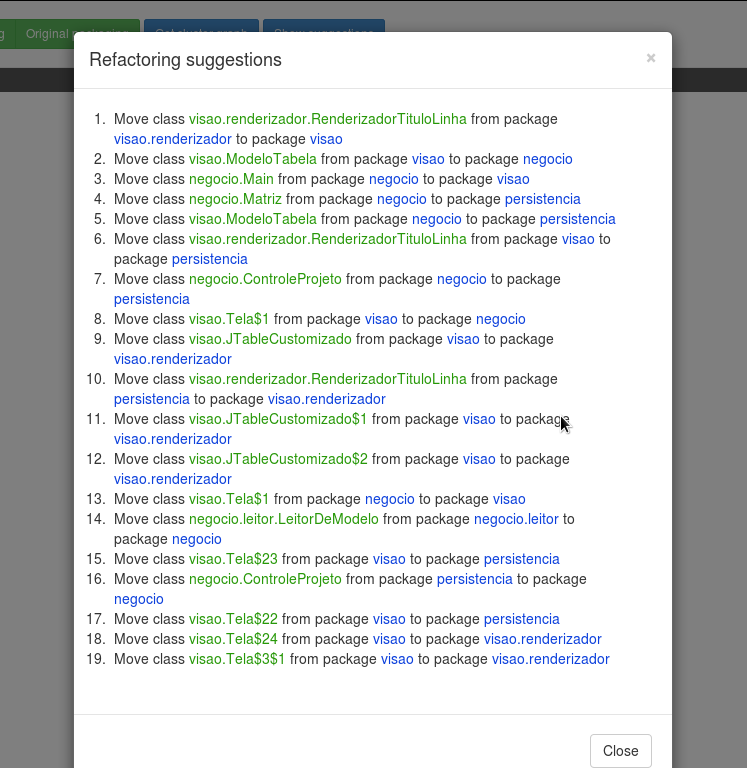}
\caption{An example of some suggestions provided by Picasso}
\label{picasso_suggestions}
\end{figure}

\paragraph{} 
Picasso provides many extra features that are as follows.

\begin{itemize}
\item Import Java jar files and class files.
\item Import UML structures.
\item Provides an option to choose famous graphs such as the Zachary club network.
\item Calculates modularity and provides a refactored solution for a software system using Alg. \ref{alg:proposed_refactoring}.
\item Calculates Martin's instability metric for software packages.
\item Hierarchically provides cluster graphs of a graph.
\item Provides an extendible messaging system for future works.
\item Provides an edited version of JSNetworkX's force layout graph visualization algorithm.
\item Provides functions for adding and removing graph edges and nodes.
\item Provides the ability to lock graph nodes in one position for better viewing.
\end{itemize}

\paragraph{}
Picasso's top menu provides the main functionalities of the tool. The menu bar is depicted in Fig. \ref{picasso_menu} and shows that the tool is in working mode and awaits a response from the Picasso server. The gray section of the top bar shows some information such as the modularity measure of the current clustering and the name of the current selected class in the dependency graph. The top buttons consist of two main groups. The left, green buttons provide directed refactoring, undirected refactoring and the original clustering of the software system being analyzed. The right, blue buttons provide the options for viewing the graph's clustering graph, viewing the movement suggestions after refactoring and viewing instability measures for different packages. An example of the instability measure window is shown in Fig. \ref{picasso_inst}.

\begin{figure}[ht!]
\centering
\includegraphics[scale=0.55]{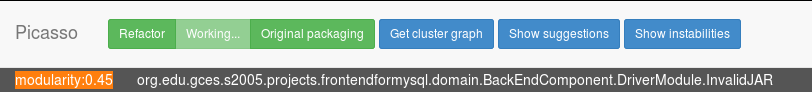}
\caption{Picasso's top menu bar}
\label{picasso_menu}
\end{figure}

\begin{figure}[ht!]
\centering
\includegraphics[scale=0.5]{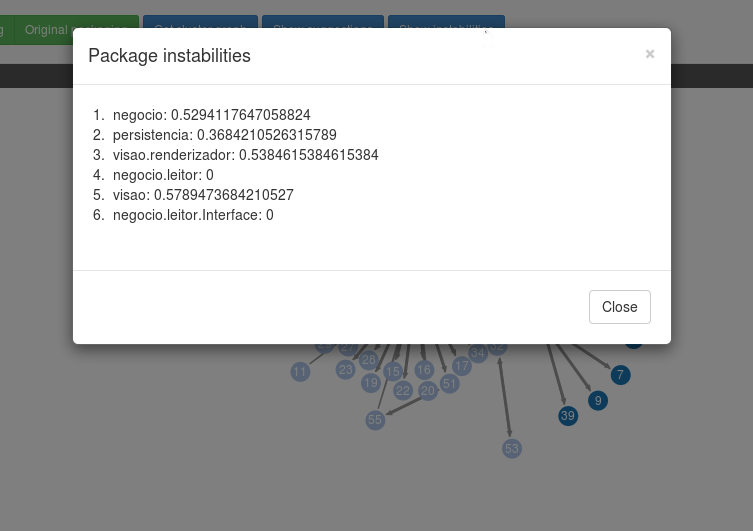}
\caption{An example of Picasso's instabilities window}
\label{picasso_inst}
\end{figure}

\begin{figure}[ht!]
\centering
\includegraphics[scale=0.4]{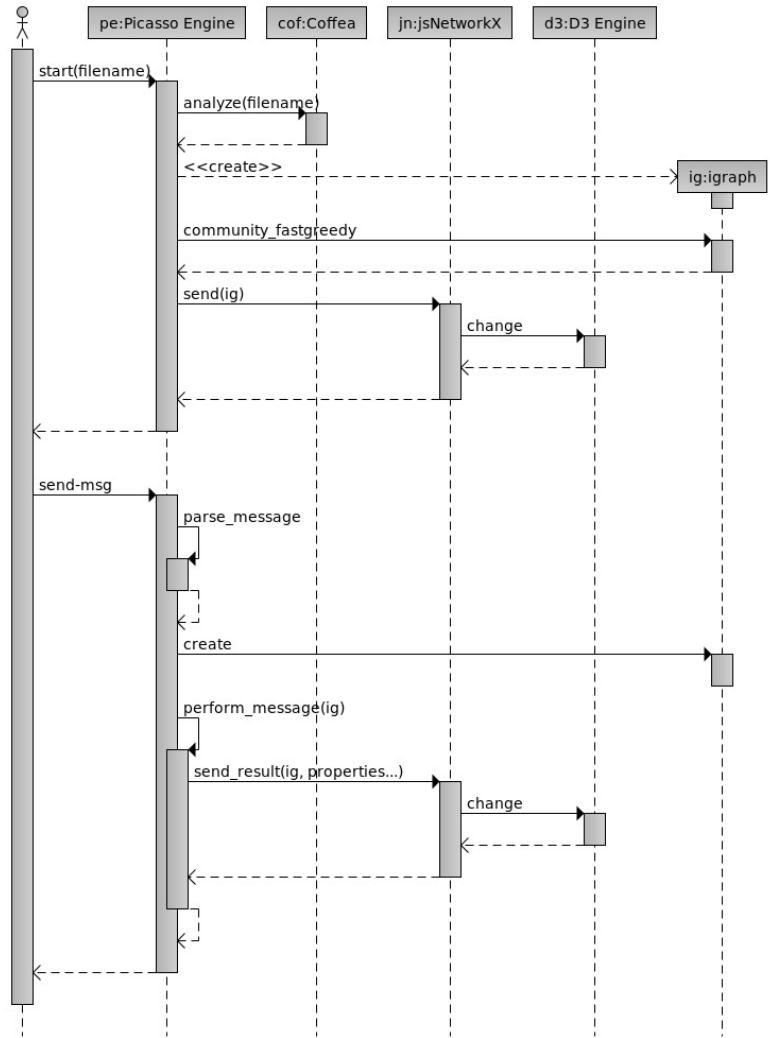}
\caption{Picasso's sequence diagram}
\label{seq_diagram}
\end{figure}

\section{Picasso's 3rd party dependencies}
\paragraph{}
Picasso utilizes many diverse 3rd party libraries. Some of these libraries have been customized and tweaked specially for Picasso. The following list contains some brief information on these libraries.

\begin{itemize}
\item \textbf{Coffea\footnote{\url{https://github.com/sbilinski/coffea}} java analysis tool.} Coffea is an open source static code analyzer for Java byte code that can export package dependency graphs in various graph file formats. Coffea is written in Python and therefore can be integrated well with Picasso.
\item \textbf{D3\footnote{\url{http://d3js.org}} visualization library.} D3 stands for Data-Driven Documents, and is arguably one of the best Javascript data visualization tools that utilizes HTML5, SVG (Scalable Vector Graphics), CSS3 and Javascript capabilities and provides an extremely flexible platform for data visualization. 
\item \textbf{JSNetworkX\footnote{\url{http://felix-kling.de/JSNetworkX}} network visualization library.} This library is a port of the popular NetworkX Python graph library and is build upon the D3 platform. 
\item \textbf{Python's igraph\footnote{\url{http://igraph.org}} library.} Python's igraph library is used in Picasso for creating and manipulating graphs on the server side.
\item \textbf{Python's Socks-js\footnote{\url{https://github.com/sockjs/sockjs-client}} library.} The Socks-JS library is used by Picasso for creating a web socket messaging system that can pass graph and graph cluster information between the server and client sides of the program.
\end{itemize}

\paragraph{}
The sequence diagram in Fig. \ref{seq_diagram} shows how Picasso interacts with these dependencies. 
\chapter{Conclusion and future works}
\pagebreak

\paragraph{}
The fast expansion of software systems and their complexities, makes large software projects difficult to maintain, and their components hard to reuse. The focus of this work is to use the benefits of graph clustering algorithms and present a refactoring technique for software packages while considering several important software metrics such as coupling, cohesion and stability. 

\paragraph{}
This work presents a proposition and proof that the cluster quality metric provided by Newman \cite{newman2004finding} is in favor of Martin's Stable Dependencies Principle \cite{martin2003agile} and provides examples that show how directed graphs are important when a system is being modeled with dependency graphs. 

\paragraph{}
For evaluating our proposed algorithm and to test it in a real life scenario, we implemented a tool, namely Picasso for refactoring software packages and visualizing their directed dependency graphs. The provides tool takes a software system written in the Java language and gives a list of suggested movements for classes. 

\paragraph{}
Some ideas are presented in the following sections as future works. These possible works are divided into two main categories; refactoring and tool improvements. 

\section{Refactoring}
\paragraph{}
The refactoring method presented in this work utilizes a directed and weighted version of Newman's modularity. This requires modularity to be calculated in every step of the proposed algorithm and thus performs slower than the algorithm of Pan et al \cite{pan2013refactoring}. This may be considered as one of the problems that can be tackled in future works. Also, some rare problems have been found with the directed version of modularity \cite{malliaros2013clustering} and alternative approaches should also be considered, i.e. random walk based mathods such as LinkRank.

\paragraph{}
The importance of directed dependency graphs can also be analyzed in the class level, while using an appropriate metric for class couplings and cohesion. 

\section{Tool improvements}
\paragraph{}
Some improvements can be applied on the tool proposed in this work. Currently a force directed layout is used for visualizing graphs. A force directed layouts simulate physical forces between nodes and edges to aesthetically draw a graph. Spring like attractive forces that are based on Hooke's law are typically used. The force directed layout can be enhanced with collision detection algorithms, so that nodes that are members of the same community can be grouped together instead of being mixed in with nodes from other communities. Also several problems with force directed layouts in large graphs have been pointed out in the literature \cite{yakovlev2009cluster} and radial tree layouts have been proposed as alternatives. Radial tree layouts can be considered in future implementations of the tool. An example of a radial tree layout from a tool named Barrio, provided in \cite{yakovlev2009cluster} is depicted in Fig. \ref{radial_tree}.

\begin{figure}[ht!]
\centering
\includegraphics[scale=0.5]{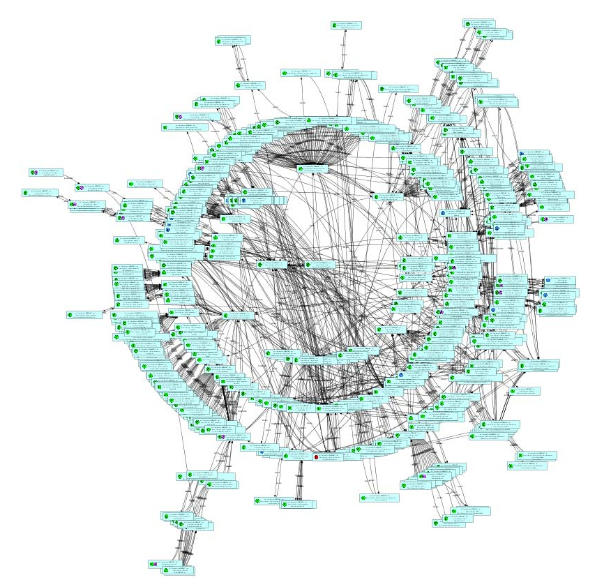}
\caption{An example of a radial tree layout}
\label{radial_tree}
\end{figure}

\paragraph{}
Being able to force a node to be a member of a certain community while calculating the resulting modularity of the graph cluster can be considered as one of the important options in future versions of the application. Some library classes might need to be kept in their original package even though modularity is decreased by doing so. 

\begin{appendices}
\chapter{Server side code for Picasso}
\begin{lstlisting}[language=python]
# -*- coding: utf-8 -*-
"""
Picasso, ver 0.1
Author: Mohammad A.Raji
Depends on:
    -sockjs-tornado for the asynchronus python server
    -D3.js for visualizing graphs
    -JSNetworkX for visualizing graphs
    -igraph for community detection algorithms
    -Coffea for extracting java dependencies
"""
from __future__ import division
import os
import tornado.ioloop
import tornado.web
import sockjs.tornado
import igraph
import time
import json
import hashlib
from igraph import *

# Request handles class for the index page
class IndexHandler(tornado.web.RequestHandler):
    def get(self):
        self.render('picasso.html')

# Connection class: responsible for all the client/server connections
class Connection(sockjs.tornado.SockJSConnection):
    participants = set()

    def on_open(self, info):
        # Add client to the clients list
        self.participants.add(self)
        if len(sys.argv) > 1:
            if len(sys.argv) > 2:
                if sys.argv[1] == "--famous":
                    g = Graph.Famous(sys.argv[2])
                    #g.to_undirected()
                    self.broadcast(self.participants, "graph/" + refactoring.graphToString())
            else:
                refactoring.parseCode(sys.argv[1]);
                self.broadcast(self.participants, "graph/" + refactoring.graphToString())
                self.broadcast(self.participants, "labels/" + refactoring.getVertexLabels())

    def on_message(self, message):
        # Take appropriate action when a message arrives from the client
        self.parseAndApplyMessage(message)

    def on_close(self):
        # Remove client from the clients list and broadcast leave message
        self.participants.remove(self)

    def parseAndApplyMessage(self, msg):
        global refactoring
        message = msg.split("/")
        command = message[0]
        if (len(message) > 1):
            argument = message[1];

            if command == "clusters":
                refactoring.parseGraph(argument)
                refactoring = Refactoring(refactoring.detectCommunities().cluster_graph())
                self.broadcast(self.participants, "graph/" + refactoring.graphToString())
            elif command in ["addnode", "removenode", "addedge", "removeedge"]:
                refactoring.parseChange(command, argument)
            elif command == "getoriginal":
                self.broadcast(self.participants, "membership/" + str(refactoring.original_membership).strip("[]"))
                self.broadcast(self.participants, "measures/" + Refactoring.formatMeasures(refactoring.original_modularity));
                self.broadcast(self.participants, "instability/" + Refactoring.formatInstability(refactoring.original_package_instability));
            elif command == "refactor":
                refactored_results = refactoring.refactor()
                go_membership = refactored_results[1]
                self.broadcast(self.participants, "membership/" + str(go_membership).strip("[]"))
                self.broadcast(self.participants, "measures/" + Refactoring.formatMeasures(refactored_results[0]));
                self.broadcast(self.participants, "suggestions/" + Refactoring.formatSuggestions(refactored_results[2]));
                self.broadcast(self.participants, "instability/" + Refactoring.formatInstability(refactoring.getInstabilityForEachPackage(refactoring.g, go_membership, refactoring.packages)));
            elif command == "urefactor":
                    refactored_results = refactoring.refactor(False)
                    go_membership = refactored_results[1]
                    self.broadcast(self.participants, "membership/" + str(go_membership).strip("[]"))
                    self.broadcast(self.participants, "measures/" + Refactoring.formatMeasures(refactored_results[0]));
                    self.broadcast(self.participants, "suggestions/" + Refactoring.formatSuggestions(refactored_results[2]));
                    self.broadcast(self.participants, "instability/" + Refactoring.formatInstability(refactoring.getInstabilityForEachPackage(refactoring.g, go_membership, refactoring.packages)));

            elif command == "fastgreedy":
                go_membership = refactoring.detectCommunities().membership
                self.broadcast(self.participants, "membership/" + str(go_membership).strip("[]"))
                self.broadcast(self.participants, "measures/" + refactoring.getClusterMeasures());

        else:
            refactoring.parseGraph(command)
            go_membership = refactoring.detectCommunities().membership
            self.broadcast(self.participants, "membership/" + str(go_membership).strip("[]"))
            self.broadcast(self.participants, "measures/" + refactoring.getClusterMeasures());

# All refactoring and graph related capabilities
class Refactoring():
    g = None;
    gc = None;
    parsed_code_filename = None;
    original_membership = None;
    original_modularity = None;

    def __init__(self, graph=None):
        self.packages = dict()
        self.original_package_instability = dict()
        self.g = graph

    def parseChange(self, command, arg):
        if command == "addnode":
            self.g.add_vertex(arg)
        elif command == "removenode":
            self.g.delete_vertices(arg)
        elif command == "addedge":
            from_edge = int(arg.split(",")[0])
            to_edge = int(arg.split(",")[1])
            self.g.add_edge(from_edge, to_edge)
        elif command == "removeedge":
            from_edge = int(arg.split(",")[0])
            to_edge = int(arg.split(",")[1])
            self.g.delete_edges((from_edge, to_edge))

    def parseGraph(self, st):
        self.g = Graph()
        st_graph = st.split("|")
        vertices = st_graph[0].split(";")
        for v in vertices:
            self.g.add_vertex(v)

        edges = st_graph[1].split(";")
        for e in edges:
            from_edge = e.split(",")[0]
            to_edge = e.split(",")[1]
            self.g.add_edge(from_edge, to_edge)
        return self.g

    def graphToString(self):
        vertexlist = []
        for v in self.g.vs:
            vertexlist.append(v.index)
        vertex_str = str(vertexlist).strip("[]");
        vertex_str = vertex_str.replace(" ", "");
        s = str(self.g.get_edgelist()).strip("[]");
        s = s.replace("(", "");
        s = s.replace("),", ";");
        s = s.replace(")", "");
        s = s.replace(" ", "");
        s = vertex_str + "|" + s
        return s

    @staticmethod
    def formatMeasures(measure):
        measures = "";
        measures += "modularity:" + str(round(measure, 2)) # + ","
        return measures;

    @staticmethod
    def formatSuggestions(suggestions):
        return json.dumps(suggestions)

    @staticmethod
    def formatInstability(package_instability):
        return json.dumps(package_instability.items())

    def getClusterMeasures(self):
        if (self.gc == None):
            self.gc = self.detectCommunities()
        measures = "";
        measures += "modularity:" + str(round(self.gc.modularity, 2)) # + ","
        return measures;

    def getVertexLabels(self):
        msg = "";
        for v in self.g.vs:
            msg = msg + v['label'] + ",";

        msg = msg.strip(",")
        return msg

    def detectCommunities(self):
        self.g = self.g.simplify(loops='False', multiple='False')
        gc = self.g.as_undirected().community_fastgreedy()
        gc = gc.as_clustering()
        self.gc = gc
        return gc

    # This function works independently from local graph g
    def makeDwpdnMembership(self, graph):
        # If this graph has no label attribute at all
        if "label" not in graph.vs.attribute_names():
            custom_package_index = 0
            for v in graph.vs:
                v['label'] = str(custom_package_index) + "."
                custom_package_index += 1

        self.packages = dict()
        membership = []
        recent_package = 0;
        for v in graph.vs:
            if v['label'] == None:
                # Make a random package name if this package is a new isolated
                # node with no name
                random_package_name = hashlib.md5(str(time.time())).hexdigest()[0:5] + "."
                v['label'] = random_package_name
            package_name = v['label'].rsplit(".", 1)[0];
            if (self.packages.has_key(package_name)):
                membership.append(self.packages[package_name]);
            else:
                self.packages[package_name] = recent_package;
                membership.append(recent_package);
                recent_package += 1;

        return membership

    # This function works independently from local graph g
    def calculateQ(self, graph, membership):
        Q = 0.0;
        graph = graph.simplify(loops='False', multiple='False')
        m = graph.ecount();
        edge_count_factor = 2*m;
        if graph.is_directed() == True:
            edge_count_factor = m
        for i in graph.vs:
            for j in graph.vs:
                if membership[i.index] != membership[j.index]:
                    continue
                else:
                    Aij = 0
                    if graph.are_connected(i, j):
                        Aij = 1
                    wi_out = i.outdegree()
                    wj_in = j.indegree()
                    Q += Aij - (wi_out*wj_in) / edge_count_factor

        Q *= 1/edge_count_factor
        return Q;

    def getInstabilityForEachPackage(self, graph, membership, packages):
        package_in = packages.fromkeys(packages.iterkeys(), 0);
        package_out = packages.fromkeys(packages.iterkeys(), 0);
        package_instability = packages.fromkeys(packages.iterkeys(), 0);
        for e in graph.es:
            if (membership[e.target] != membership[e.source]):
                package_in[self.getPackageNameFromIndex(membership[e.target])] += 1;
                package_out[self.getPackageNameFromIndex(membership[e.source])] += 1;

        print package_out
        print package_in
        for package in packages:
            if package_out[package] + package_in[package] != 0:
                package_instability[package] = package_out[package] / (package_out[package] + package_in[package])
            else:
                package_instability[package] = 0;

        return package_instability

    def getPackageNameFromIndex(self, index):
        for name, i in self.packages.iteritems():
            if i == index:
                return name

    def refactor(self, directed=True):
        if directed == True:
            graph = self.g;
        else:
            graph = self.g.as_undirected()
        suggested_movements = [];
        Q_prime = -1;
        membership = self.makeDwpdnMembership(graph);
        # Check if there is only one package
        if membership.count(0) == len(membership):
            membership = range(len(membership))
        Q = self.calculateQ(graph, membership);
        selected_community = 0;
        v_range = range(graph.vcount())
        while True:
            restart_loop = False
            for index in v_range:
                i = graph.vs[index]
                for j in graph.vs:
                    temp_membership = list(membership)
                    temp_membership[i.index] = temp_membership[j.index];
                    temp_Q = self.calculateQ(graph, temp_membership);
                    if (temp_Q > Q_prime):
                        Q_prime = temp_Q;
                        selected_community = membership[j.index];
                if (Q_prime > Q):
                    suggested_movements.append((i['label'], self.getPackageNameFromIndex(membership[i.index]), self.getPackageNameFromIndex(selected_community)));
                    membership[i.index] = selected_community;
                    Q = Q_prime;
                    restart_loop = True
                    break;
            if not restart_loop:
                break;

        print "Done refactoring"
        return [Q_prime, membership, suggested_movements]

    def parseCode(self, filename):
        os.system("coffea -R -i " + filename + " -f gml -o temp.gml")
        self.parsed_code_filename = filename
        self.g = read('temp.gml')
        self.original_membership = self.makeDwpdnMembership(self.g)
        self.original_modularity = self.calculateQ(self.g, self.original_membership);
        self.original_package_instability = self.getInstabilityForEachPackage(self.g, self.original_membership, self.packages)

if __name__ == "__main__":
    import logging
    logging.getLogger().setLevel(logging.DEBUG)

    # Instantiate the main refactoring object
    refactoring = Refactoring();

    # Create the router
    Router = sockjs.tornado.SockJSRouter(Connection, '/picasso')

    # Create Tornado application
    app = tornado.web.Application(
            [(r"/", IndexHandler)] + Router.urls
    )

    # Make Tornado app and listen on port 8081
    port = 8081
    app.listen(port)
    print "Listening on port " + str(port);

    # Start IOLoop
    tornado.ioloop.IOLoop.instance().start()

\end{lstlisting} 

\chapter{Client side Javascript of Picasso}
\begin{lstlisting}[language=javascript]
    last = 1;
    conn = null;
    labels = [];
    $(function() {
      colors = ['#FF7F0E', '#AEC7E8', '#2CA02C', '#D62728', '#1F77B4']
      color = window.d3.scale.category20();
      function log(msg)
      {
          console.log(msg);
      }

      function parseAndApplyMessage(msg)
      {
          var message = msg.split("/");
          var command = message[0];
          if (message.length > 1)
          {
              var arguments = message[1];
              if (command == "membership")
              {
                  applyMembership(arguments);
              }
              else if (command == "graph")
              {
                  applyGraph(arguments);
              }
              else if (command == "labels")
              {
                  saveLabels(arguments);
              }
              else if (command == "measures")
              {
                  updateMeasures(arguments);
              }
              else if (command == "suggestions")
              {
                  setSuggestions(arguments);
                  $("#refactor-btn").button("reset");
                  $("#urefactor-btn").button("reset");
              }
              else if (command == "instability")
              {
                  setInstabilities(arguments);
              }
          }
      }

      function setSuggestions(msg)
      {
          var suggestions = JSON.parse(msg);
          txt = "<ol>";
          for (i in suggestions)
          {
              txt += "<li>Move class <span class='class-name'>" + suggestions[i][0] + "</span> from package <span class='package-name'>" + suggestions[i][1] + "</span> to package <span class='package-name'>" + suggestions[i][2] + "</span></li>";
          }
          txt += "</ol>"
          $(".modal-body").html(txt);
      }

      function setInstabilities(msg)
      {
          var instabilities = JSON.parse(msg);
          txt = "<ol>";
          for (i in instabilities)
          {
              txt += "<li>" + instabilities[i][0] + ": " + instabilities[i][1] + "</li>";
          }
          txt += "</ol>"
          $("#instability .modal-body").html(txt);
    }
      function updateMeasures(msg)
      {
          $("#measures").text(msg);
      }

      function saveLabels(msg)
      {
          labels = msg.split(",");
          for (var i = 0;i<G.nodes().length;i++)
          {
              d3.select("#node" + i.toString()).attr("data-label", labels[i]);
          }
      }
      function applyGraph(msg)
      {
          G.clear();
          var splitted_str = msg.split("|")
          var vertex_str = splitted_str[0];
          var edges_str = splitted_str[1];
          var vertices = vertex_str.split(",");
          for (key in vertices)
          {
              var vertex = parseInt(vertices[key]);
              G.add_node(vertex);
          }

          var edges = edges_str.split(";");
          for (key in edges)
          {
              var edge = edges[key];
              var from = parseInt(edge.split(",")[0]);
              var to = parseInt(edge.split(",")[1]);
              G.add_edge(from, to);
          }
          window.d3.selectAll(".node").on("mouseover", function(){
              jQuery("#label").text(d3.select(this).attr("data-label"));
          });
      }

      function applyMembership(msg)
      {
          membership = msg.split(", ");
          iteration = G.nodes_iter();
          for (key in membership)
          {
              node = iteration.next();
              G.node.get(node).color = color(membership[key]);
              window.d3.select("#node" + node.toString() + " circle" ).style("fill", color(membership[key]));
          }
      }

      function connect() {
        disconnect();

        var transports = $('#protocols input:checked').map(function(){
            return $(this).attr('id');
        }).get();

        conn = new SockJS('http://' + window.location.host + '/picasso', transports);

        log('Connecting...');

        conn.onopen = function() {
          log('Connected.');
          update_ui();
        };

        conn.onmessage = function(e) {
            //log('Received: ' + e.data);
            parseAndApplyMessage(e.data);
        };

        conn.onclose = function() {
          log('Disconnected.');
          conn = null;
          update_ui();
        };
      }

      function disconnect() {
        if (conn != null) {
          log('Disconnecting...');

          conn.close();
          conn = null;

          update_ui();
        }
      }

      function update_ui() {
        var msg = '';

        if (conn == null || conn.readyState != SockJS.OPEN) {
          $('#status').text('disconnected');
          $('#connect').text('Connect');
        } else {
          $('#status').text('connected (' + conn.protocol + ')');
          $('#connect').text('Disconnect');
        }
      }

      $('#connect').click(function() {
        if (conn == null) {
          connect();
        } else {
          disconnect();
        }

        update_ui();
        return false;
      });

      connect();
      
	addNode = function(node_name)
	{
		G.add_node(node_name);
		conn.send(G.nodes().join(";") + "|" + G.edges().join(";"));
	}
    addEdge = function(from, to)
    {
		G.add_edge(from, to);
		conn.send(G.nodes().join(";") + "|" + G.edges().join(";"));
    }
    removeEdge = function(from, to)
    {
		G.remove_edge(from, to);
		conn.send(G.nodes().join(";") + "|" + G.edges().join(";"));
    }
    drawGraph = function()
    {
        jsnx.draw(G, {
            element: '#canvas',
            with_labels: true,
            pan_zoom: {
                enabled: false
            },
            layout_attr: {
            	'charge': -420,
                'linkDistance': 100
            },
            node_style: {
                fill: function(d) 
                {
                    return d.data.color || '#AAA';
                },
                stroke: 'none'
            },
            edge_style: {
                fill: '#999'
            },
            label_style: {
                fill: 'white',
                'font-size': '12px'
            }
        }, true);
    }
	G = jsnx.DiGraph();
    drawGraph();
    });
\end{lstlisting}

\end{appendices}

\bibliographystyle{ieeetr}
\bibliography{thesis1.bib}

%
%
%


\end{document}